\newtheorem{lemma}{Lemma}[section]
\newtheorem{theorem}{Theorem}[section]
\newtheorem{proposition}{Proposition}[section]
\newenvironment{proof}{\noindent\\ \noindent\relax{\sc
     Proof}}{{\samepage\par\nopagebreak\hbox
     to\hsize{\hfill$\Box$}}}
\newcommand{\be}{\begin{equation}} \newcommand{\ee}{\end{equation}}
\newcommand{\bd}{\begin{displaymath}} \newcommand{\ed}{\end{displaymath}}
\newcommand{\ba}{\begin{align}} \newcommand{\ea}{\end{align}}
\newcommand{\baa}{\begin{align*}} \newcommand{\eaa}{\end{align*}}
\newcommand{\ben}{\begin{enumerate}} \newcommand{\een}{\end{enumerate}}
\newcommand{\bi}{\begin{itemize}} \newcommand{\ei}{\end{itemize}}
\newcommand{\E}[1]{\operatorname{E}\left[ #1 \right]}
\newcommand{\Var}[1]{\operatorname{Var}\left[ #1 \right]}
\newcommand{\var}[1]{\operatorname{Var}\left[ #1 \right]}
\newcommand{\cov}[2]{\operatorname{Cov}\left[ #1,#2 \right]}
\newcommand{\Et}[1]{\operatorname{E}\left[ #1 \vert \mathcal{Y}_{n}\right]}
\newcommand{\Vart}[1]{\operatorname{Var}\left[ #1 \vert  \mathcal{Y}_{n} \right]}
\newcommand{\covt}[2]{\operatorname{Cov}\left[ #1,#2 \vert  \mathcal{Y}_{n}\right]}
\title{A Consistent Estimator of the Evolutionary Rate}
\author{Krzysztof Bartoszek and Serik Sagitov}
\begin{document}

\maketitle

\begin{abstract}
We consider a branching particle system where particles reproduce according to the pure birth 
Yule process with the birth rate $\lambda$, conditioned on the observed number of particles to 
be equal $n$. Particles are assumed to move independently on the real line according to the 
Brownian motion with  the local variance 
$\sigma^2$. In this paper we treat $n$ particles as a sample of related species. The spatial 
Brownian motion of a particle describes the development of a trait value of interest 
(e.g. log--body--size). We propose an unbiased estimator $R_n^2$ of the evolutionary rate 
$\rho^2=\sigma^2/\lambda$. The estimator  $R_n^2$ is proportional to the sample variance   
$S_n^2$ computed from $n$ trait values. We find an approximate formula for the standard error 
of  $R_n^2$ based on a neat asymptotic relation for the variance of  $S_n^2$.
\end{abstract}

(Keywords: Branching Brownian motion, conditioned branching process, tree--free phylogenetic
comparative method, quantitative trait evolution, Yule process)

\section{Introduction}\label{intro}
Biodiversity within a group of $n$ related species could be quantified by comparing suitable 
trait values. For some key trait values like log body size, researchers apply 
the Brownian motion model proposed by \citet{JFel1985}. It is assumed that the current trait 
values $(X_1^{(n)},\ldots,X_n^{(n)})$  have evolved from the common ancestral state $X_{0}$ as a 
branching Brownian motion 
with the local variance  $\sigma^{2}$. Given a phylogenetic tree describing the ancestral history 
of the group of species the Brownian trajectories  of the trait values for sister species are 
assumed to evolve independently after the ancestor species splits in two daughter species. 
The resulting phylogenetic sample $(X_1^{(n)},\ldots,X_n^{(n)})$ consists of identically 
distributed normal random variables with a dependence structure caused by the underlying 
phylogenetic signal.

A mathematically appealing and biologically motivated version of the phylogenetic sample model 
assumes that the phylogenetic tree behind the normally distributed trait values 
$(X_1^{(n)},\ldots,X_n^{(n)})$ is unknown. As a natural first choice to model the unknown 
species tree,  we use the Yule process with birth rate $\lambda$   \citep[see][]{GYul1924}.  
Since  the phylogenetic sample size is given, $n$, the Yule process should be conditioned 
on having $n$ tips: such conditioned branching processes have received significant attention in 
recent years,
due to e.g. \citet{DAldLPop2005,TGer2008a,AMooetal,TreeSim1,TreeSim2,TStaMSte2012}. 
This "tree-free" approach for comparative phylogenetics was previously addressed by
\citet{SSagKBar2012} and \citet{FCraMSuc2013},  \citep[much earlier][used a 
related  branching Brownian process as a population genetics model]{AEdw1970}.

In our work we show that a properly scaled sample variance is an unbiased and consistent estimator 
of the compound parameter $\rho^2=\sigma^{2}/\lambda$ which we call the evolutionary rate of the 
trait value in question. Our main mathematical result, Theorem \ref{thmVar}, gives an asymptotical 
expression for the variance of the phylogenetic sample variance. This result leads to a simple 
asymptotic formula for the estimated standard error of our estimator. 
Our result is in agreement with the work of \citet{FCraMSuc2013} whose simulations indicate that 
their approximate
maximum likelihood procedure yields an unbiased consistent estimator of $\sigma^{2}$.
This is  illustrated using the example of the Carnivora order studied previously
by \citet{FCraMSuc2013}.

The phenotype modelled by a Brownian motion is usually interpreted as the case of 
neutral evolution with random oscillations
around the ancestral state. This model was later developed into an adaptive
evolutionary model based on the Ornstein--Uhlenbeck process by
\citet{JFel1988,THan1997,MButAKin2004,THanJPieSOrz2008,KBarJPiePMosSAndTHan2012}.
The tree-free setting using the Ornstein--Uhlenbeck process was addressed by \citet{KBarSSag2012} 
where for the Yule--Ornstein--Uhlenbeck model, some phylogenetic confidence intervals for the 
optimal trait value were obtained via three limit theorems for the phylogenetic sample mean.
Furthermore, it was shown that the phylogenetic sample variance is an unbiased consistent 
estimator of
the stationary variance of the process. 

At the end of their discussion \citet{FCraMSuc2013} write that as the 
the tree of life is refined interest in ``tree--free'' estimation
methods may diminish. They however indicate that ``tree--free''
estimates may be useful to calculate starting points for 
simulation analysis. We certainly agree with the second statement
but believe that development of ``tree--free'' methods
should proceed alongside that of ``tree--based'' ones.

One of the most useful features of the tree--free comparative models
is that they offer a natural method of tree growth allowing 
for study of theoretical properties of phylogenetic models as demonstrated
in this work \citep[and also][]{SSagKBar2012,KBarSSag2012,KBar2014,FCraMSuc2013}.
Another alternative to studying properties of these estimators is the 
tree growth model proposed by \citet{CAne2008,LHoCAne2013,CAneLHoSRoc2014}. In this setup the total height 
of the tree is kept fixed and new tips are added to randomly chosen branches. These
two approaches seem to be in agreement, at least up to the second 
moments, since e.g.
they agree on the lack of consistency of estimating $X_{0}$.
In \citet{SSagKBar2012} we showed that under the Yule Brownian
motion model $\var{\overline{X}_{n}} \to 2\sigma^{2}$. 

In a practical situation ``tree--free'' methods can be used for a 
number of purposes. Firstly as pointed out by \citet{FCraMSuc2013}
they can be useful for calculating starting points for 
further numerical estimation procedures or defining prior
distributions in a Bayesian setting. Secondly they have to be used
in a situation where the tree is actually unknown
e.g. when we are studying fossil data or trying to make
predictive statements about future phenotypes, e.g.
development of viruses. Thirdly they can be used for 
various sanity checks. If they contradict ``tree--based'' results
this could indicate that the numerical method fell into a local
maximum. 

The paper has the following structure. Section \ref{main} presents the model, 
the main results and an application. Section \ref{outline} states two lemmata and a 
proposition directly yielding the assertion of   Theorem \ref{thmVar}. 
Proposition  \ref{cypr} deals with the covariances between coalescent times for 
randomly chosen pairs of tips from a random Yule $n$-tree. 
The properties of the coalescent of a single random 
pair were studied previously by e.g. \citet{MSteAMcK2001} and \citet{SSagKBar2012}. 
In Section \ref{col} we state two lemmata needed for the proof of Proposition  
\ref{cypr}. Section \ref{prpr} contains two further lemmata and the proof of 
Proposition  \ref{cypr}. In Section \ref{pL4}, \ref{secl}, and \ref{h12} we prove the 
lemmata from Sections \ref{outline},  \ref{col}, and \ref{prpr}. 
Appendix  \ref{App} contains some useful results concerning harmonic numbers of the 
first and second order.

\section{The main results}\label{main}

The basic evolutionary model considered in this paper is characterized by four parameters 
$(\lambda, n,X_{0},\sigma^{2})$ and consists of two stochastic components: a random phylogenetic 
tree defined by parameters $(\lambda, n)$ and a trait evolution process along a lineage  defined by
parameters $(X_{0},\sigma^{2})$.
The first component, species tree connecting $n$ extant species, is modelled by the pure birth 
Yule process \citep{GYul1924} with 
the birth (speciation) rate $\lambda$ and conditioned on having $n$ tips  \citep{TGer2008a}. For the second component  we adapt the approach by assuming that for a given $i=1,\ldots, n$, the current trait value $X_i^{(n)}$ has evolved from the ancestral state $X_{0}$ according to the Brownian motion 
with the local variance  $\sigma^{2}$.

Treating the collection of the current trait values $(X_1^{(n)},\ldots,X_n^{(n)})$ 
generated by such a process as a sample of identically distributed, but dependent, 
observations, we are interested in the properties of the basic summary statistics
\begin{displaymath}
\overline{X}_{n}={X_1^{(n)}+\ldots+X_n^{(n)}\over n},\quad S^{2}_{n}
=\frac{1}{n-1}\sum_{i=1}^{n}(X_{i}^{(n)}-\overline{X}_{n})^{2},
\end{displaymath}
the sample mean and sample variance.

\begin{figure}
\begin{center}
\includegraphics[width=0.4\textwidth]{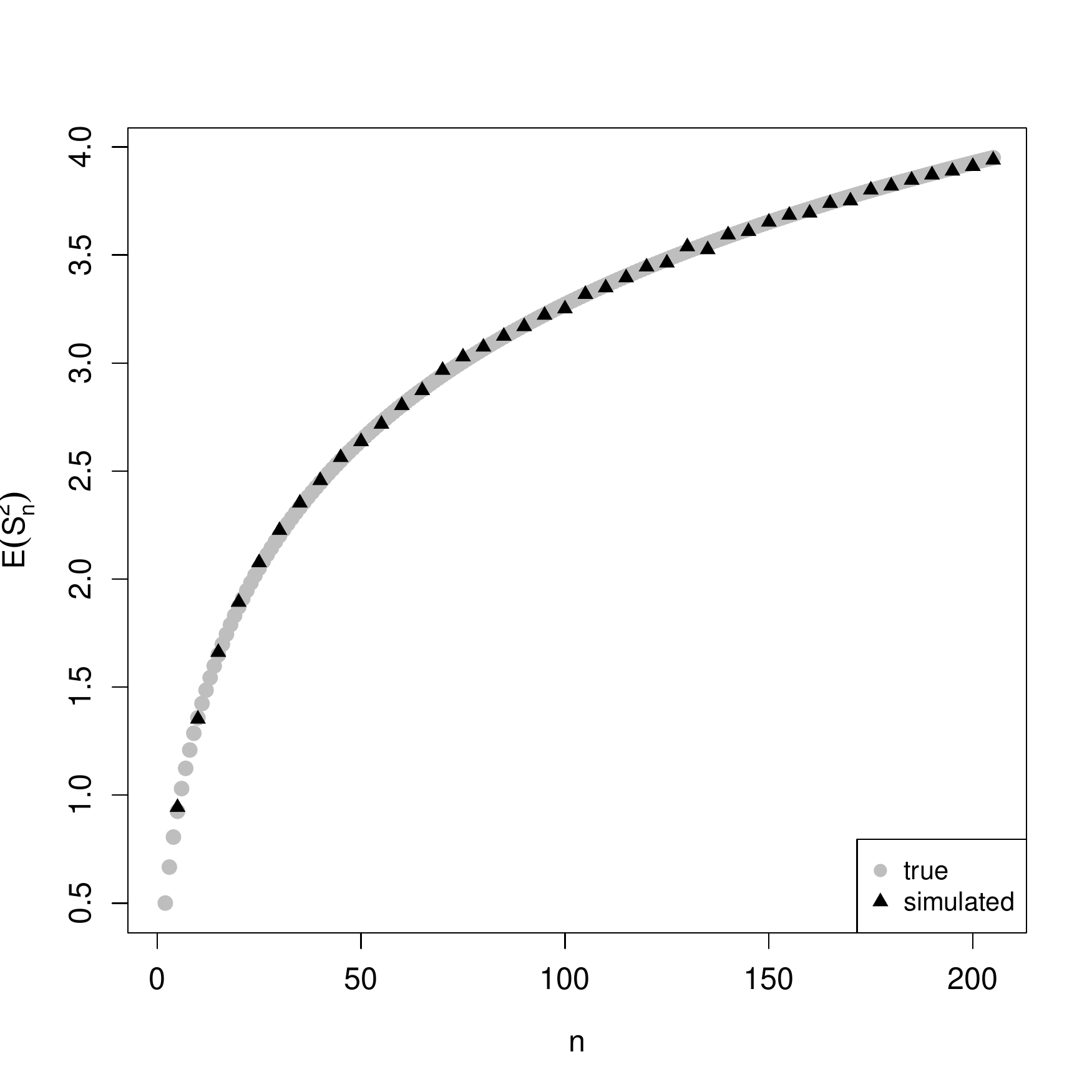}
\includegraphics[width=0.4\textwidth]{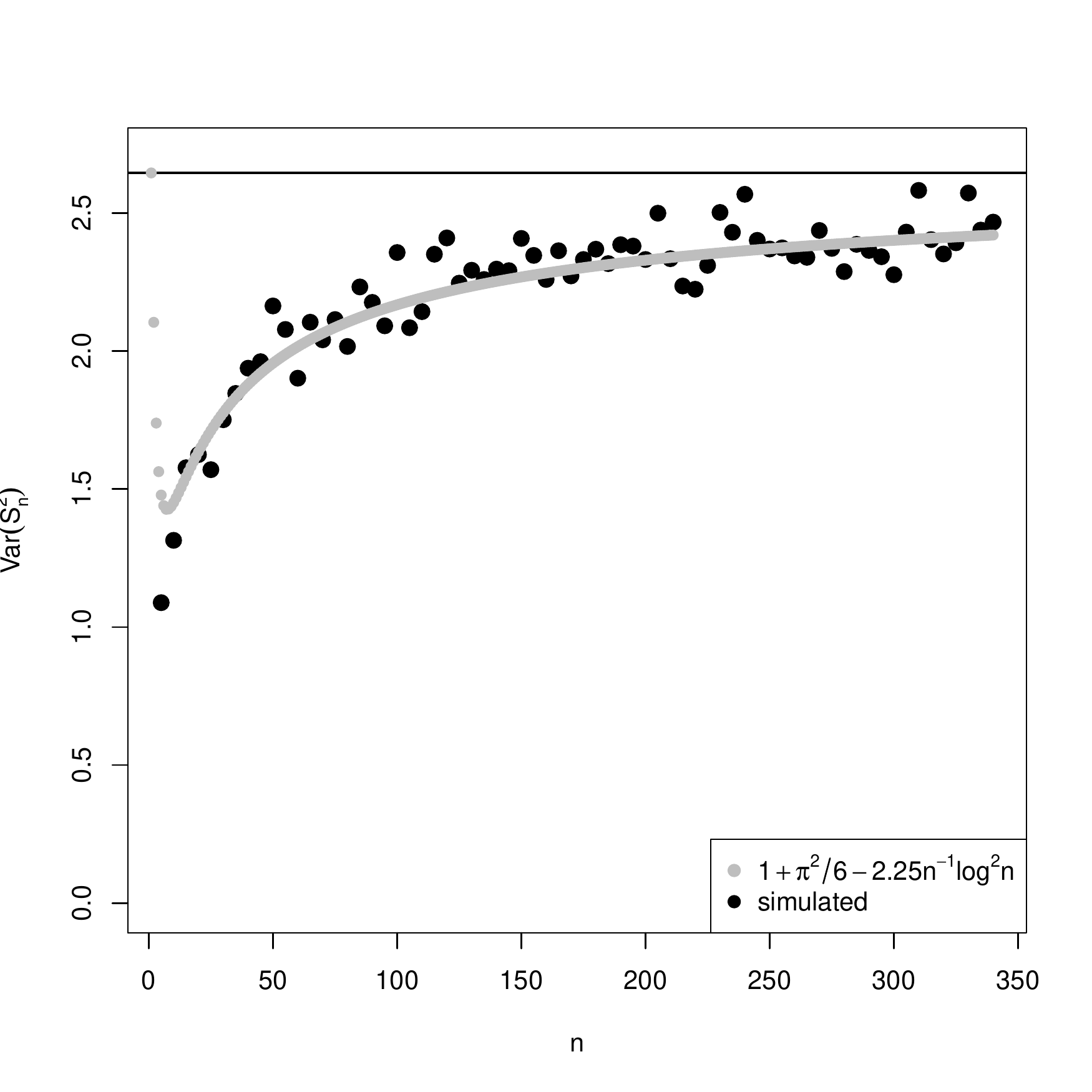}
\caption{Left: True and simulated values of 
$\E{S^{2}_{n}}$, right: simulated values of $\var{S^{2}_{n}}$ with limit
equalling $\pi^{2}/6+1$. Each point comes from $10000$ simulated Yule trees and
Brownian motions on top of them. 
Parameters used in simulations are $\lambda=1$, $X_{0}=0$ and $\sigma^{2}=1$.
The grey line on the right panel fits a curve based on the convergence rate $O(n^{-1}\log n^2)$.
\label{figMomSn}}
\end{center}
\end{figure}

According to \citep{SSagKBar2012} we have 
\[
\E{S_{n}^{2}} = \left( \frac{n+1}{n-1}H_{n} -2\frac{n}{n-1} \right){\sigma^{2}\over\lambda},
\]
see  Fig \ref{figMomSn}, left panel (all simulations are produced using the TreeSim \citep{TreeSim1,TreeSim2} and
mvSLOUCH \citep{KBarJPiePMosSAndTHan2012} R packages). It follows that the normalized sample variance
\be\label{est}
R^{2}_{n} = \left( \frac{n+1}{n-1}H_{n} -2\frac{n}{n-1} \right)^{-1}S_{n}^{2}
\ee
gives  an unbiased estimator of the compound parameter $\rho^2:={\sigma^{2}\over\lambda}$ for the Yule--Brownian--Motion model, see Fig \ref{figHistSn2}. In the comparative phylogenetics framework the ratio $\rho^2$ can be called the {\it evolutionary rate} as it measures the speed of change in the trait value when the time scale is such that we expect one speciation event per unit of time and per species. The next theorem is the main asymptotic result of this paper, illustrated by Fig \ref{figMomSn}, right panel. 
\begin{theorem}\label{thmVar}
Consider the sample variance $S^{2}_{n}$ for the Yule--Brownian--Motion model with parameters $(\lambda, n,X_{0},\sigma^{2})$. 
Its variance satisfies the following asymptotic relation

$$\var{S^{2}_{n}/\rho^2}=1+{\pi^2\over6}+O(n^{-1}\log^2n),\quad n\to\infty.$$
\end{theorem}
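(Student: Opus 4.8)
The plan is to expand $S_n^2$ in a way that isolates the dependence on the random Yule tree, then compute its second moment by conditioning on the tree and averaging over the tree topology. Write $Y_i = X_i^{(n)} - X_0$, so that conditionally on the tree the vector $(Y_1,\dots,Y_n)$ is centered Gaussian with $\operatorname{Cov}(Y_i,Y_j) = \sigma^2 \tau_{ij}$, where $\tau_{ij}$ is the time from the root to the most recent common ancestor of tips $i$ and $j$ (with $\tau_{ii}$ the total lineage length, which for a Yule $n$-tree is the same for all tips). A convenient identity is $S_n^2 = \frac{1}{n(n-1)}\sum_{i<j}(Y_i-Y_j)^2$, and $\operatorname{Var}(Y_i - Y_j) = 2\sigma^2(\tau_{ii} - \tau_{ij})$ depends only on the coalescent time of the pair $(i,j)$. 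First I would use this to re-derive $\E{S_n^2}$ from $\E{\tau_{ii}-\tau_{ij}}$ as a consistency check against the stated formula, and then pass to the second moment.

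The core computation is $\E{(S_n^2)^2}$. Expanding the square gives a quadruple sum over tips of terms $\E{(Y_i-Y_j)^2(Y_k-Y_l)^2}$. Conditionally on the tree each such term is a Gaussian fourth moment, which by Isserlis/Wick expands into products of the pairwise covariances $\sigma^2\tau_{\bullet\bullet}$; so conditionally on the tree everything reduces to a polynomial in the coalescent times $\tau_{ij}$ of the (at most four distinct) tips involved. Taking the expectation over the Yule tree, the quantities that appear are the first and second moments and the covariances of coalescent times for one random pair and for two random pairs of tips — exactly the objects controlled by Proposition~\ref{cypr} and the single-pair results of \citet{MSteAMcK2001} and \citet{SSagKBar2012} cited in the excerpt. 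I would organize the quadruple sum by the coincidence pattern among the indices $i,j,k,l$ (the $O(n)$ diagonal-type terms, the $O(n^2)$ terms sharing one index, the $O(n^3)$ terms with a single shared index across the two pairs, and the $O(n^4)$ generic terms), since the combinatorial weight of each class interacts with the order of the corresponding mixed moment of coalescent times. This bookkeeping is routine in structure but is where the $\log^2 n$ error term is born: the harmonic-number asymptotics $H_n = \log n + \gamma + O(n^{-1})$ and the second-order harmonic sums collected in Appendix~\ref{App} feed in here, and the dominant surviving contribution must assemble into $\rho^4(1+\pi^2/6)$, with $\pi^2/6 = \sum_{k\ge1}k^{-2}$ entering through the covariance structure of the coalescent times (this is the ``neat asymptotic relation'' advertised in the abstract).

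Concretely the steps are: (1) establish the $(Y_i-Y_j)^2$ representation of $S_n^2$ and the conditional-Gaussian moment formulas; (2) reduce $\E{(S_n^2)^2}$, via Wick's theorem applied conditionally on the tree, to a finite linear combination of $\E{\prod \tau_{ab}}$ over tuples of tips; (3) invoke Proposition~\ref{cypr} (and the single-pair moment results) to evaluate these tree-averaged products, keeping explicit error terms; (4) carry out the index-coincidence bookkeeping of the quadruple sum, combining combinatorial weights with the moment asymptotics; (5) subtract $(\E{S_n^2})^2$ and use the harmonic-number estimates from Appendix~\ref{App} to simplify, obtaining $\operatorname{Var}(S_n^2/\rho^2) = 1 + \pi^2/6 + O(n^{-1}\log^2 n)$. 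I expect the main obstacle to be step~(4) together with the sharp tracking of errors through step~(3): one must show that the numerous $O(n^3)$ and $O(n^4)$ contributions, each multiplied by covariances of coalescent times that decay like $n^{-1}$ or $n^{-2}$ with logarithmic corrections, cancel down to a finite constant rather than a divergent expression, and that the residual is genuinely $O(n^{-1}\log^2 n)$ and not merely $o(1)$. This is precisely why the paper isolates Proposition~\ref{cypr} — pinning down the covariance of coalescent times for two random pairs, including its rate of decay, is the quantitative heart of the argument.
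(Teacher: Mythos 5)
Your proposal is correct and follows essentially the same route as the paper: condition on the Yule tree, apply Gaussian fourth-moment (Wick) identities, organize the quadruple sum by index-coincidence patterns, and reduce everything to the variance and covariances of coalescent times for one and two random pairs as controlled by Proposition \ref{cypr} --- this is precisely the content of Lemmata \ref{L1} and \ref{cyp}. The only cosmetic difference is your use of the pairwise-difference representation $S_n^2=\frac{1}{n(n-1)}\sum_{i<j}(Y_i-Y_j)^2$, which cancels the tree height $U_n$ at the outset rather than at the end of Lemma \ref{cyp}.
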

In terms of our estimator \eqref{est}, Theorem \ref{thmVar} yields
\[
\Var{R_{n}^{2}/\rho^2} = {1+{\pi^2\over6}\over(\log n+\gamma -2)^2}+O(n^{-1}),
\]
where $\gamma=0.577$ is the Euler constant, implying that $R_{n}^{2}$ is a consistent 
estimator of the evolutionary rate $\rho^2$. It follows that for large $n$, 
the standard error (estimated standard deviation) of the unbiased estimator $R_{n}^{2}$ 
can be approximated by 
\be\label{error}
{\rm SE } (R_{n}^{2})\approx \sqrt{1+{\pi^2\over6}}\cdot{R_{n}^{2}\over \log n+\gamma -2}\approx {1.626\over \log n-1.423}\cdot R_{n}^{2}.
\ee

\begin{figure}
\begin{center}
\includegraphics[height=0.2\textwidth,width=0.3\textwidth]{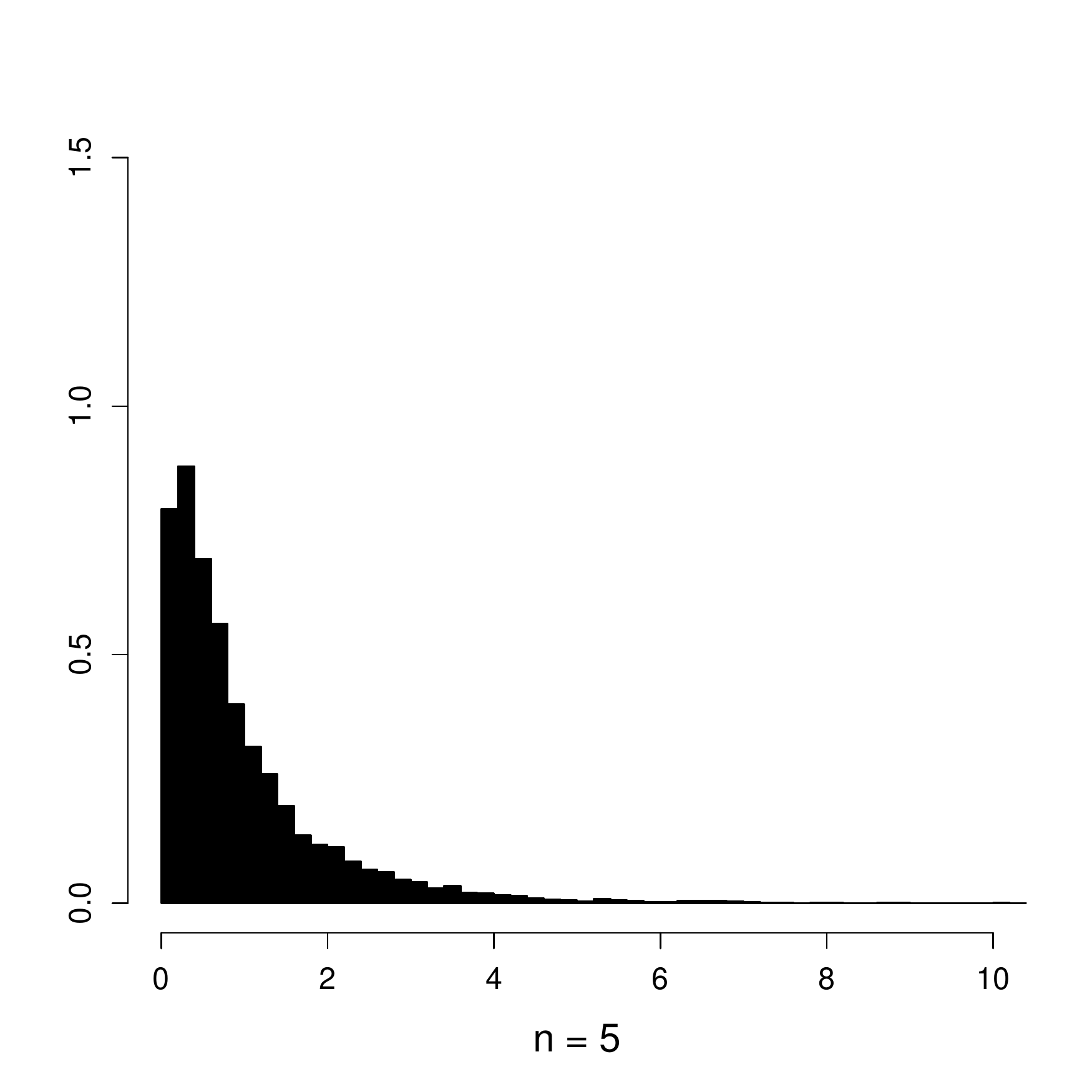}
\includegraphics[height=0.2\textwidth,width=0.3\textwidth]{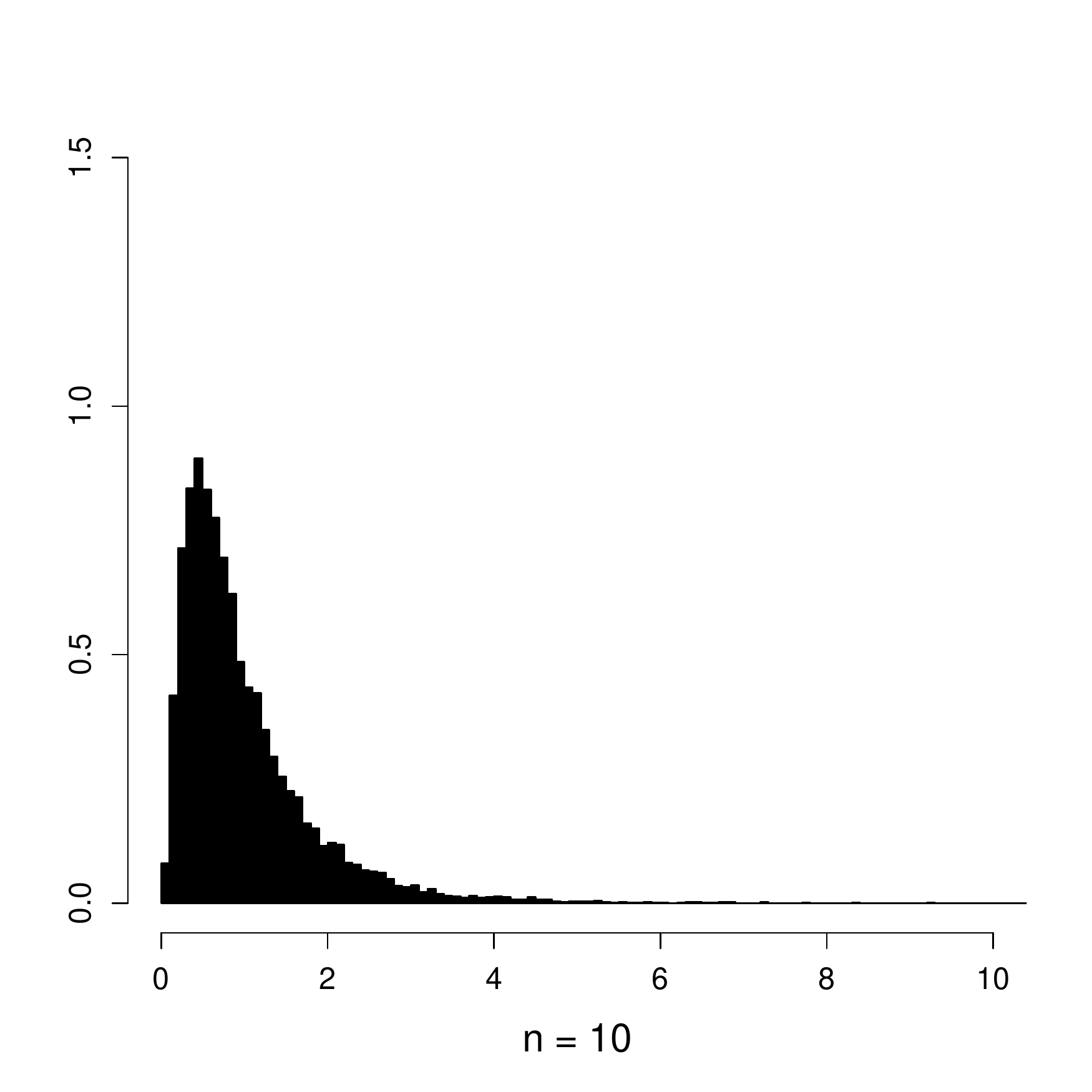}
\includegraphics[height=0.2\textwidth,width=0.3\textwidth]{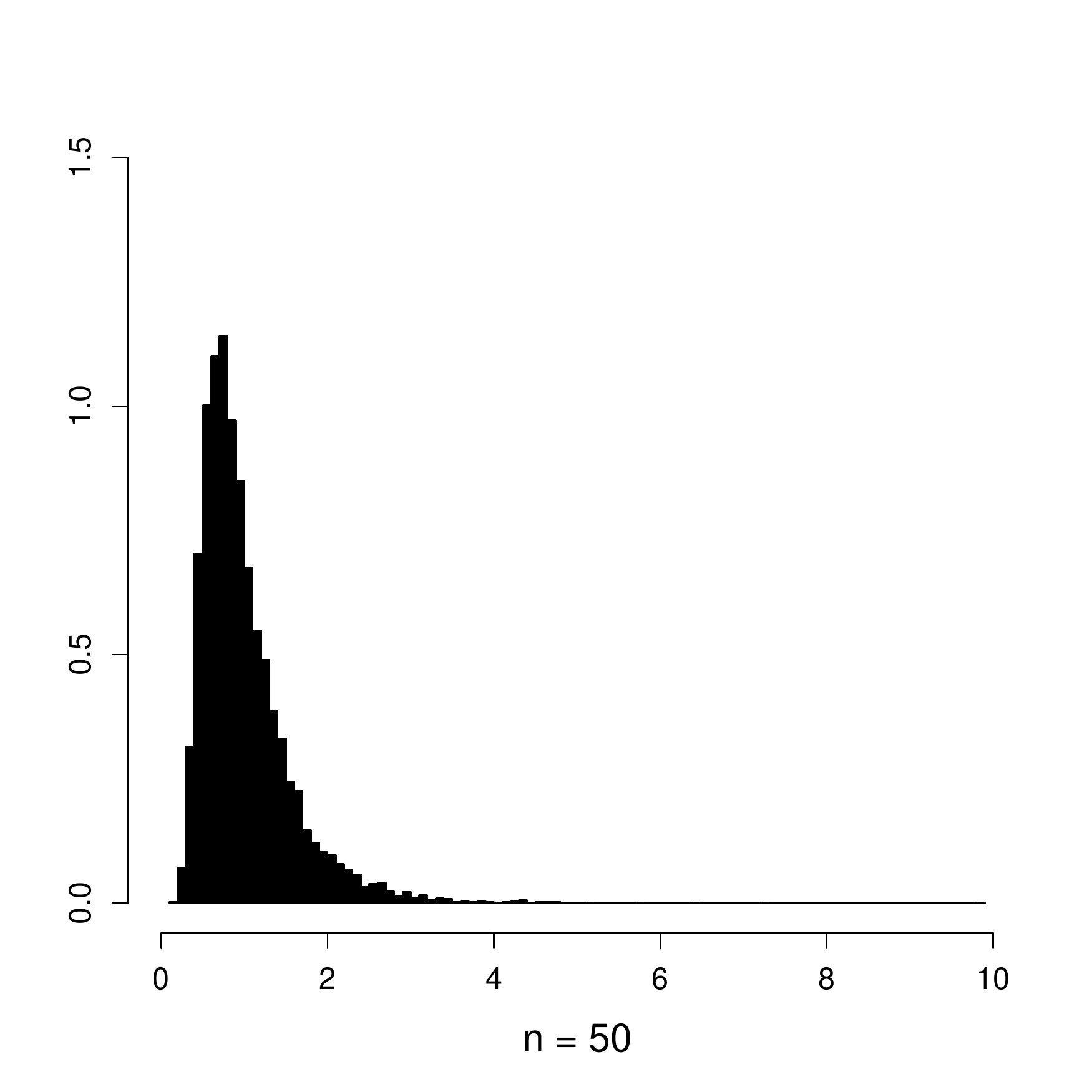} \\
\includegraphics[height=0.2\textwidth,width=0.3\textwidth]{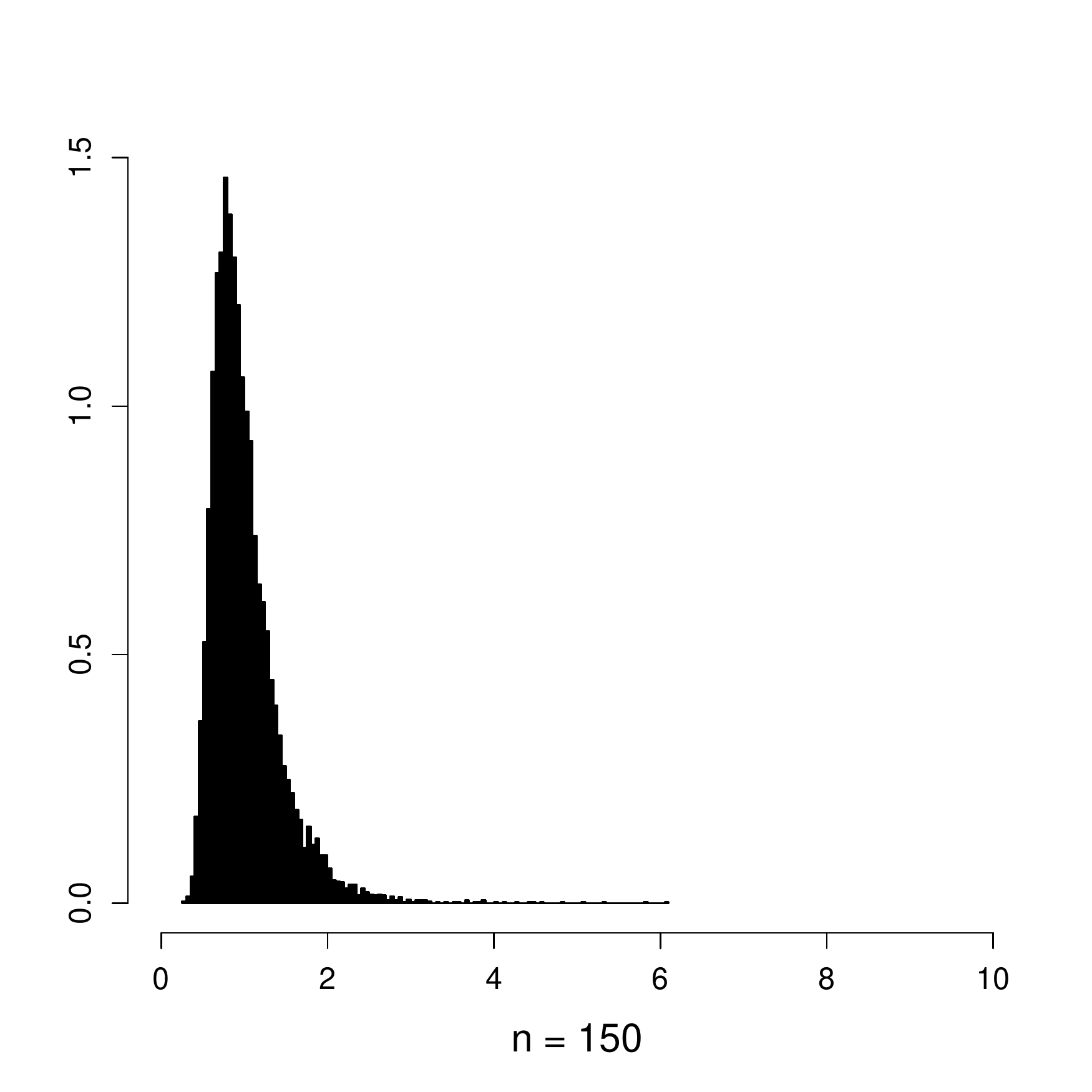}
\includegraphics[height=0.2\textwidth,width=0.3\textwidth]{HistSn2_150BM.pdf}
\includegraphics[height=0.2\textwidth,width=0.3\textwidth]{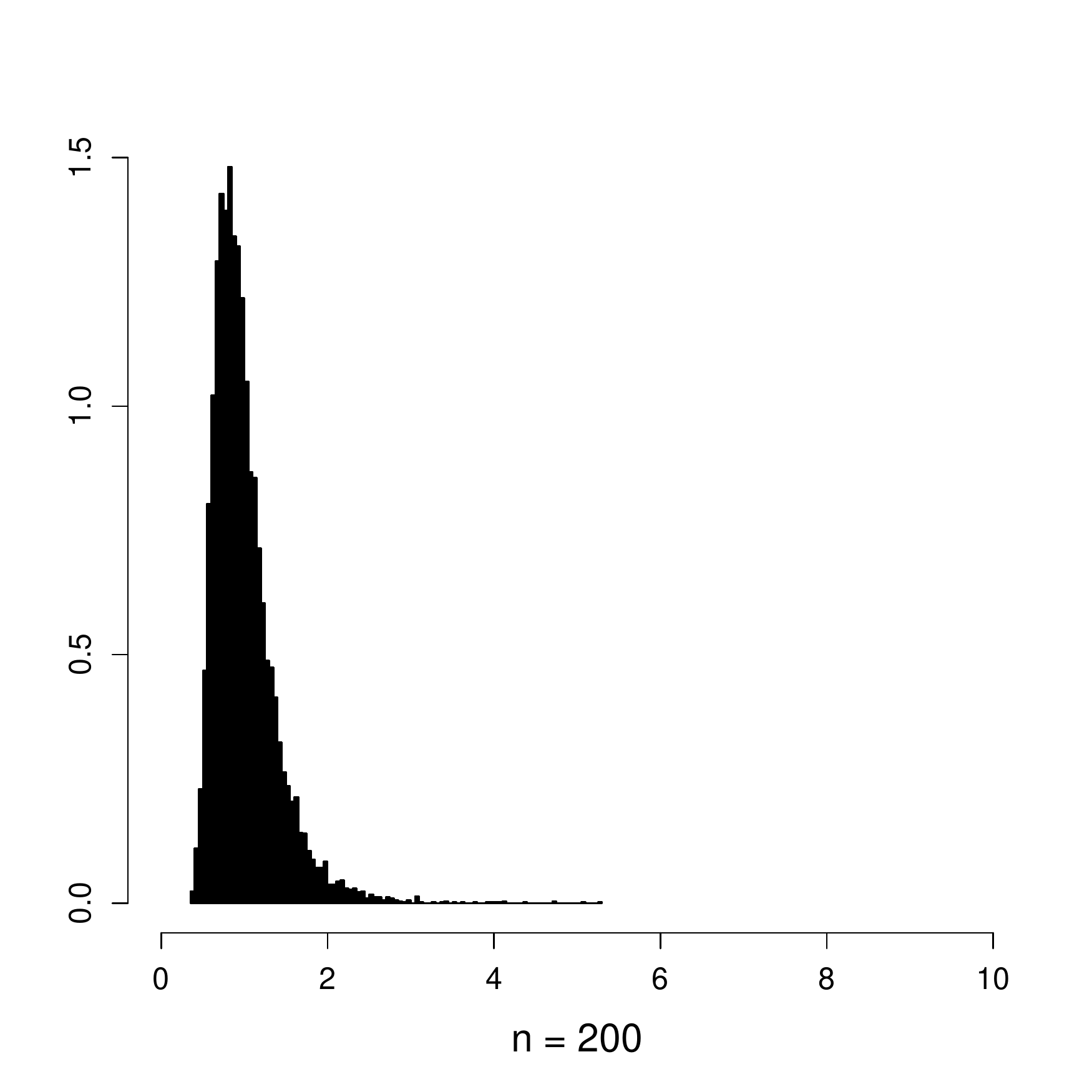} 
\caption{Histograms of 
$R^{2}_{n}$ for left to right top $n=5,10,50$ and bottom $n=100,150,200$. 
Parameters used in simulations are $\lambda=1$, $X_{0}=0$ and $\sigma^{2}=1$.
\label{figHistSn2}}
\end{center}
\end{figure}

The estimator of Eq. \eqref{est} should be compared to the  approximate maximum--likelihood  
estimator for the local variance $\sigma^{2}$ recently proposed by \citet{FCraMSuc2013} in
the same framework of the Yule--Brownian--Motion model. 
The main difference between two approaches is that in  \citet{FCraMSuc2013} it is assumed that one 
knows both the number of tips and the total height of the otherwise unknown species tree. 
The Crawford-Suchard estimator is based on a closed form of the distribution of phylogenetic 
diversity -- the sum of branch lengths
connecting the species in a clade. 

As an application of their estimator,  \citet{FCraMSuc2013}
study different families of the Carnivora
order, estimating $\sigma^{2}$ for each of the 12 clades. The data for the log-body-size disparities was taken from the PanTHERIA database \citep{PanTHERIA}. The data summary and  the Crawford-Suchard estimates are shown in the left part of 
Tab.\,\ref{tabCarnivora}. In the right part of Tab.\,\ref{tabCarnivora} we present our estimates $\hat\rho^2$ for the evolutionary rate parameter 
$\rho^2=\sigma^2/\lambda$ for each of the 12 families in the Carnivora order. The standard error is computed using \eqref{error}. We note that the data  does not take into account  the 
newly described species \textit{Bassaricyon neblina} from the Procyonidae family 
\citep{Olinguito}.

In the next-to-last column we list the ratios demonstrating a surprisingly good agreement between our and Crawford-Suchard estimates. The ratio is taken between two products: $\hat\rho^2u_n$ on one hand, and $\hat\sigma^{2}t_n$ on the other.
Here $u_n=\E{U_{n}}$ is  the expected age of the conditioned standard Yule process with $\lambda=1$, while $t_n$ is the clade age assumed to be known in the Crawford-Suchard framework. Both $\hat\rho^2u_n$ and $\hat\sigma^{2}t_n$ estimate the same quantity -- the variance in the trait values for the evolution of the corresponding clade. Therefore, one should expect these ratios to be close to one. And indeed, the 12 ratios have mean 0.97 and standard deviation 0.20. 

Our estimator and its standard error are computed by simple formulae given above. A major weakness of our estimator is relatively big standard error for realistic richness values, see the 7th column in Tab.\,\ref{tabCarnivora}. This can be explained by the fact that we do not use an additional information about the species tree, like the height of the tree used in the Crawford-Suchard estimator.

\begin{table}
\begin{center}
\begin{tabular}{ccccc|cccc}

Family & $n$ &$t_n$ &  Disparity & $\hat\sigma^{2}$ (SE)& $u_n$ & $\hat\rho^2$ (SE) &$\hat\rho^2u_n\over\hat\sigma^{2}t_n$& 
${\hat\rho^2\over\hat\sigma^{2}/\hat \lambda}$  \\
\hline
Felidae & 40 (7)& 33.3 & 1.588 &  .080  (.009)  &4.279 & .649 (.466)& 1.042            &0.560                     \\
Viverridae & 35 (6)& 37.4 &0.662 &.029  (.004) &  4.147 &  .284 (.217)&  1.086 & 0.676 \\
Herpestidae & 33 (4)& 25.5 & 0.482 &  .030   (.003)  & 4.089 & .211 (.166) &1.128  &  0.485\\
Eupleridae & 8 (0)& 25.5& 0.916 &  .079  (.010)   &2.718 & .758 (1.72)& 1.023  &0.662\\
Hyaenidae & 4 (0)& 32.2 &   0.805 &.122  (.005)  &  2.083&.999 (19.5)& 0.530 &0.565 \\
Canidae & 35 (3)& 48.9  &  0.678 & .030  (.004)  &4.147 &  .290 (.221)& 0.825      & 0.667 \\
Ursidae & 8 (0)& 42.6 & 0.303 & .024  (.002) & 2.718&  .251 (.569)&  0.667  & 0.722 \\
Otariidae & 16 (2)& 24.5 &0.386 &.028  (.003)  &  3.381 &  .227 (.274)& 1.119 & 0.559  \\
Phocidae & 19 (0)& 24.5&0.751 &  .052  (.005) & 3.548&  .410 (.438)& 1.142           &   0.544\\
Mephitidae & 12 (3)& 32.0 &0.570 &.039  (.005) & 3.103 &  .384 (.588)& 0.955 &0.679   \\
Mustelidae & 59 (10)& 27.4& 2.263 & .126  (.014)  &  4.663& .811 (.497)& 1.095  &  0.444\\
Procyonidae & 14 (1)& 27.4 &0.531 & .037  (.004)   &  3.252& .332 (.444)& 1.065 &   0.619
\end{tabular}
\caption{Data summary. 2nd column: clade richness (number of missing trait values); 3rd column: the clade age in millions of years;
4th column: ${n-1\over n}\cdot S_n^2$ trait disparity ;
6th column: the expected age $u_n=\E{U_{n}}$ of the conditioned standard Yule process with $\lambda=1$.
}\label{tabCarnivora}
\end{center}
\end{table}

This close agreement is obtained despite a number of features that complicates the comparison between two methods. Our approach in its current form does not allow to take into 
account the fact that some trait values are missing. We calculated $\hat\rho^2$ for the trait disparity 
as if it was computed using all $n$ trait values.
Moreover, it is not be clear how to take into account the
measurement variance. As shown by \citet{THanKBar2012} even with a known
tree, the measurement error  can cause very diverse effects. Therefore we would expect
the situation to be even more interesting when we integrate the phylogeny out.

In their work \citet{FCraMSuc2013} 
estimated the overall speciation rate to be $\hat\lambda=0.069$ per million years. The last column of Tab.\,\ref{tabCarnivora} demonstrates that using this common value for the speciation rate $\lambda$ produces huge discrepancy between our estimates $\hat\rho^2$ for the rates of evolution $\rho^2=\sigma^2/\lambda$ and the rates of evolution computed using the Crawford-Suchard estimates for $\sigma^2$. 
This observation points out that a fair direct comparison of  $\hat\rho^2$ and  $\hat\sigma^{2}/\hat\lambda$ would requires specific estimates of the speciation rate $\lambda$ for each of the 12 clades.


\section{Outline of the proof of Theorem \ref{thmVar}}\label{outline}
We start with a general observation, Lemma \ref{L1}, concerning the sample variance
\[D_{n}^2=\frac{1}{n-1}\sum\limits_{i=1}^{n} (Y_{i}-\overline{Y})^{2}\]
of $n$, possibly dependent and not necessarily identically distributed, 
observations  $(Y_{1},\ldots,Y_{n})$ with sample mean 
$\overline{Y}=n^{-1}\sum\limits_{i=1}^{n} Y_{i}$. 

\begin{lemma}\label{L1}
If $(W_{1},W_{2},W_{3},W_{4})$ is a random sample without replacement from random values $(Y_{1},\ldots,Y_{n})$, then
 \begin{align}\label{stist1}
\Var{D_{n}^{2}}  &=\cov{W_{1}^{2}}{W_{2}^2}-2\cov{W_{1}^{2}}{W_{2}W_3}+\cov{W_{1}W_2}{W_{3}W_4}+n^{-1} B_n,
\end{align}
where 
\begin{align*}
|B_n|  &<\E{W_1^{4}}+4\E{W_{1}^{3}W_{2}}+\E{W_{1}^{2}W_{2}^2}+ 6\E{W_{1}^{2}W_{2}W_{3}}+4\E{W_{1}W_{2}W_{3}W_{4}}.
\end{align*}
\end{lemma}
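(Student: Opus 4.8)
The plan is to reduce the statement to an algebraic identity and then expand everything combinatorially. First I would rewrite the sample variance as
\[
D_n^2=\frac1n\sum_{i=1}^n Y_i^2-\frac1{n(n-1)}\sum_{i\neq j}Y_iY_j=:A-B,
\]
using only the identity $\sum_i(Y_i-\overline{Y})^2=\frac{n-1}{n}\sum_iY_i^2-\frac1n\sum_{i\neq j}Y_iY_j$. Both $A$ and $B$ are symmetric functions of $(Y_1,\dots,Y_n)$, and the point of this particular split is that every normalized sum over tuples of \emph{distinct} indices equals the expectation of the matching monomial in the sampled values; in particular $\E{A}=\E{W_1^2}$ and $\E{B}=\E{W_1W_2}$, and the numerical coefficients that occur are ratios of falling factorials to powers of $n$. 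From $\Var{D_n^2}=\Var{A}-2\cov{A}{B}+\Var{B}$ the whole computation splits into three blocks.

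Next I would compute $\E{A^2}$, $\E{AB}$ and $\E{B^2}$ by classifying index tuples according to their coincidence pattern. For $A^2=n^{-2}\sum_{i,j}Y_i^2Y_j^2$ there are two patterns, $i=j$ and $i\neq j$. For $AB$, a triple sum with the single constraint $j\neq k$, there are three: $i,j,k$ all distinct, $i=j$, and $i=k$. For $B^2=n^{-2}(n-1)^{-2}\sum_{j\neq k}\sum_{l\neq m}Y_jY_kY_lY_m$ the four indices, subject to $j\neq k$ and $l\neq m$, fall into exactly three families: all four distinct; exactly one of the four cross pairs $(j,l),(j,m),(k,l),(k,m)$ coincides (four subcases, each leaving three distinct indices, for instance $j=l$ produces $Y_j^2Y_kY_m$); or two disjoint cross pairs coincide ($\{j=l,k=m\}$ or $\{j=m,k=l\}$, each producing $Y_j^2Y_k^2$ with $j\neq k$). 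The counting identity $n(n-1)(n-2)(n-3)+4n(n-1)(n-2)+2n(n-1)=n^2(n-1)^2$ certifies that these families are exhaustive. Each block then contributes one of the monomial expectations $\E{W_1^4}$, $\E{W_1^3W_2}$, $\E{W_1^2W_2^2}$, $\E{W_1^2W_2W_3}$, $\E{W_1W_2W_3W_4}$ multiplied by an explicit rational function of $n$.

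Then I would reassemble the three blocks, rewriting the squared-mean terms via $(\E{W_1^2})^2=\E{W_1^2W_2^2}-\cov{W_1^2}{W_2^2}$, $\E{W_1^2}\E{W_2W_3}=\E{W_1^2W_2W_3}-\cov{W_1^2}{W_2W_3}$ and $(\E{W_1W_2})^2=\E{W_1W_2W_3W_4}-\cov{W_1W_2}{W_3W_4}$. Expanding each falling-factorial prefactor as $1+O(n^{-1})$, the leading parts collapse to exactly $\cov{W_1^2}{W_2^2}-2\cov{W_1^2}{W_2W_3}+\cov{W_1W_2}{W_3W_4}$; in particular the three squared-mean contributions must cancel, which is the built-in consistency check of the computation. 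Everything left over carries a factor $n^{-1}$ with a coefficient that is a bounded rational function of $n$ (ratios such as $(n-3)/(n-1)$); collecting it as $n^{-1}B_n$ and applying the triangle inequality yields the stated bound on $|B_n|$.

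The principal obstacle is accounting rather than conception: getting the multiplicities of the coincidence families of $B^2$ exactly right (the counting identity above guards against an off-by-one error) and then verifying that, once the prefactors are expanded, the squared-mean contributions genuinely annihilate one another --- a single stray factor $n/(n-1)$ would otherwise survive as a spurious $O(1)$ term instead of being absorbed into $n^{-1}B_n$. A secondary point of care is that each rational prefactor must be expanded to first order in $n^{-1}$, so that the remainder is truly $O(n^{-1})$ with a coefficient bounded uniformly in $n$; that part is routine once the patterns are enumerated.
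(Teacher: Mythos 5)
Your proposal follows essentially the same route as the paper's proof: the same splitting $D_n^2=\frac1n\sum_iY_i^2-\frac1{n(n-1)}\sum_{i\neq j}Y_iY_j$, the same classification of index tuples by coincidence pattern, the same identification of normalized sums over distinct tuples with moments of the without-replacement sample, and the same conversion of the squared mean into the three covariances, with every leftover coefficient carrying an explicit factor $n^{-1}$ collected into $B_n$. One substantive remark: your multiplicities for the coincidence families of $\bigl(\sum_{i\neq j}Y_iY_j\bigr)^2$ --- namely $4$ for the single-coincidence family and $2$ for the two-disjoint-coincidences family, certified by $n(n-1)(n-2)(n-3)+4n(n-1)(n-2)+2n(n-1)=n^2(n-1)^2$ --- are the correct ones, whereas the paper's displayed expansion of $\E{D_{n}^{4}}$ effectively uses $2$ and $1$, so its last bracket accounts for only $n(n-1)(n^2-3n+3)$ of the $n^2(n-1)^2$ terms. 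This discrepancy lives entirely in the $O(n^{-1})$ coefficients and so does not affect the identity \eqref{stist1}, but it does change the exact form of $B_n$: with the correct count the coefficient of $\E{W_{1}^{2}W_{2}W_{3}}$ in $B_n$ is $4(2n-3)/(n-1)$, which exceeds $6$ for $n\ge 4$ and tends to $8$, so the constant $6$ in the stated bound on $|B_n|$ should be read as $8$; nothing downstream depends on this constant.
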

Observe that in terms of the sample variance for the scaled trait values 
\be\label{nv}
Y_i:=Y_{i}^{(n)}={X_i^{(n)}-X_0\over\sigma/\sqrt\lambda},\quad i=1,\ldots,n,
\ee
we have       $S_{n}^2={\sigma^2D_{n}^2\over\lambda}$, 
and to prove Theorem \ref{thmVar} we have to verify that 
\be\label{den}
\Var{D_{n}^{2}} =1+{\pi^2\over6}+O(n^{-1}\log^2n).
\ee
The Yule $n$-tree underlying the set of scaled values \eqref{nv} has unit speciation rate. 
We call it the {\it standard} Yule $n$-tree, and denote by $\mathcal Y_n$ be the 
$\sigma$--algebra generated by all the information describing this random tree. 
Under the Brownian motion assumption the trait values  \eqref{nv}
are conditionally normal with
\begin{align*}
\Et{Y_i}&=0,\qquad \Vart{Y_i}=U_n,
\end{align*}
where $U_n$ is the height of the standard Yule $n$-tree, see Fig. \ref{F3}. 
Moreover, see Section \ref{pL4}, we have 
\begin{align}\label{ccov}
\covt{Y_i}{Y_j}&=U_n-\tau_{ij}^{(n)},
\end{align}
where $\tau_{ij}^{(n)}$ is the backward time to the most recent common ancestor for a pair of distinct tips $(i,j)$ in the standard Yule $n$-tree, see Fig. \ref{F3}. For a quadruplet  $(i,j,k,l)$ of tips randomly sampled without replacement out of $n$ tips in the standard Yule $n$-tree, we denote

\be\label{tou}
\tau_{1}^{(n)}=\tau_{ij}^{(n)},\quad \tau_{2}^{(n)}=\tau_{ik}^{(n)},\quad \tau_{3}^{(n)}=\tau_{lk}^{(n)},\quad \tau_{4}^{(n)}=\tau_{jk}^{(n)},\quad \tau_{5}^{(n)}=\tau_{jl}^{(n)},\quad \tau_{6}^{(n)}=\tau_{kl}^{(n)}
.
\ee

\begin{lemma}\label{cyp} 
Let $(W_{1},W_{2},W_{3},W_{4})$ be a random sample without replacement of four trait values out of $n$ random values defined by \eqref{nv}.
 Then in terms of the coalescent times \eqref{tou} we have
\begin{align*}
\cov{W_1^{2}}{W_2^{2}}&-2\cov{W_1^{2}}{W_2W_3}+\cov{W_{1}W_{2}}{W_{3}W_{4}}\\
&=2\var{\tau_1^{(n)}}-4\cov{\tau_{1}^{(n)}}{\tau_{2}^{(n)}}  +3\cov{\tau_{1}^{(n)}}{\tau_{3}^{(n)}}.
\end{align*}

\end{lemma}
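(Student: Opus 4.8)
The plan is to reduce everything to moments of the conditionally normal vector $(Y_1,\ldots,Y_n)$ given $\mathcal Y_n$, and then to use the conditional covariance formula \eqref{ccov} to rewrite the answer in terms of coalescent times. First I would condition on $\mathcal Y_n$: since the scaled trait values are jointly Gaussian with $\Et{Y_i}=0$, $\Vart{Y_i}=U_n$, and $\covt{Y_i}{Y_j}=U_n-\tau_{ij}^{(n)}$, all the fourth mixed moments that enter the three covariances on the left-hand side can be expanded by Isserlis'/Wick's theorem in terms of the conditional second moments. I would set, for the randomly sampled quadruplet $(i,j,k,l)$, the shorthand $c_{ab}=\covt{\cdot}{\cdot}$ for the relevant pair, so that $c_{11}=\Vart{Y_i}=U_n$, $c_{12}=U_n-\tau_1^{(n)}$, etc. Then $\Et{W_1^2W_2^2}=c_{11}c_{22}+2c_{12}^2$, $\Et{W_1^2W_2W_3}=c_{11}c_{23}+2c_{12}c_{13}$, and $\Et{W_1W_2W_3W_4}=c_{12}c_{34}+c_{13}c_{24}+c_{14}c_{23}$.

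Next I would take expectations over $\mathcal Y_n$ and subtract the products of (unconditional) expectations to form the three covariances; here one uses the tower property, e.g. $\cov{W_1^2}{W_2^2}=\E{\Et{W_1^2W_2^2}}-\E{\Et{W_1^2}}\E{\Et{W_2^2}}$, and similarly for the others, noting that $\E{W_1^2}=\E{U_n}$ for every tip by exchangeability. Substituting the Wick expansions and the expressions $c_{ab}=U_n-\tau^{(n)}_{(\cdot)}$, everything becomes a polynomial in $U_n$ and the six coalescent times $\tau_1^{(n)},\ldots,\tau_6^{(n)}$. The key structural point is that the combination
\[
\cov{W_1^{2}}{W_2^{2}}-2\cov{W_1^{2}}{W_2W_3}+\cov{W_{1}W_{2}}{W_{3}W_{4}}
\]
is designed so that all terms involving $U_n$ (both $\var{U_n}$ and the cross terms $\cov{U_n}{\tau_a^{(n)}}$, as well as the purely linear-in-$U_n$ pieces) cancel: the coefficients $1,-2,1$ are exactly the second-difference weights, and the Wick combinatorics produce matching cancellations. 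What survives is a quadratic form in the $\tau_a^{(n)}$ alone. Using exchangeability of the tips, every $\var{\tau_a^{(n)}}$ equals $\var{\tau_1^{(n)}}$, and the pairwise covariances $\cov{\tau_a^{(n)}}{\tau_b^{(n)}}$ fall into just two classes according to whether the two sampled pairs share a tip or not — these are exactly $\cov{\tau_1^{(n)}}{\tau_2^{(n)}}$ (one shared tip) and $\cov{\tau_1^{(n)}}{\tau_3^{(n)}}$ (disjoint pairs), as fixed by the labelling in \eqref{tou}. Collecting coefficients should then give precisely $2\var{\tau_1^{(n)}}-4\cov{\tau_1^{(n)}}{\tau_2^{(n)}}+3\cov{\tau_1^{(n)}}{\tau_3^{(n)}}$.

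The main obstacle I anticipate is purely bookkeeping: carefully tracking which of the six $\tau$'s appears in each Wick pairing of each of the three covariance terms, and verifying that the $U_n$-dependent contributions genuinely cancel rather than merely appearing to. Concretely, in $\cov{W_1^2}{W_2^2}$ the relevant pair is $(i,j)$ giving $\tau_1^{(n)}$; in $\cov{W_1^2}{W_2W_3}$ the pairs $(i,j)$ and $(i,k)$ appear, giving $\tau_1^{(n)}$ and $\tau_2^{(n)}$; and in $\cov{W_1W_2}{W_3W_4}$ the three cross-pairings over $\{i,j,k,l\}$ produce $\tau_1^{(n)},\tau_2^{(n)},\ldots,\tau_6^{(n)}$ in the pattern dictated by \eqref{tou}. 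One must also be slightly careful that "covariance of squares" versus "covariance of products" pick up the Wick terms $2c_{12}^2$ versus $c_{13}c_{24}+c_{14}c_{23}$ with the right multiplicities. Once the cancellation of $U_n$-terms is confirmed and the surviving $\tau$-quadratic is grouped by the shared-tip/disjoint dichotomy, the identity follows; no asymptotics or estimates are needed at this stage — that is deferred to Proposition \ref{cypr}.
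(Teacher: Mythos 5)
Your proposal is correct and follows essentially the same route as the paper: conditioning on $\mathcal Y_n$, expanding the conditional fourth moments of the Gaussian vector by Isserlis/Wick using $c_{ab}=U_n-\tau^{(n)}_{(\cdot)}$, applying the law of total covariance, and observing that the $1,-2,1$ weights make every $U_n$-dependent contribution cancel, leaving exactly $2\var{\tau_1^{(n)}}-4\cov{\tau_{1}^{(n)}}{\tau_{2}^{(n)}}+3\cov{\tau_{1}^{(n)}}{\tau_{3}^{(n)}}$ after grouping the surviving quadratic terms by the shared-tip versus disjoint-pair dichotomy. The bookkeeping you flag as the main obstacle does go through as you anticipate (the net coefficient of $\E{U_n^2}-2\E{U_n\tau_1^{(n)}}$ and of $\left(\E{\tau_1^{(n)}}\right)^2$ is $1-4+3=0$), which is precisely the computation carried out in the paper.
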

In view of Lemmata \ref{L1} and \ref{cyp} which are proven in Section \ref{pL4}, 
to verify \eqref{den}  it suffices to show the following asymptotic result.

 \begin{proposition}\label{cypr} 
Consider  the coalescent times \eqref{tou}. As $n\to\infty$,
 \begin{align*}
\var{\tau_1^{(n)}}&={\pi^2\over6}+O(n^{-1}\log^2 n),\\
\cov{\tau_{1}^{(n)}}{\tau_{2}^{(n)}}&=2-{\pi^2\over6}+O(n^{-1}\log^2 n),\\
\cov{\tau_{1}^{(n)}}{\tau_{3}^{(n)}}&= 3-{5\pi^2\over18}+O(n^{-1}\log^2 n).
\end{align*}
\end{proposition}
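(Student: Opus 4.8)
\textbf{Proof plan for Proposition \ref{cypr}.}

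The plan is to reduce everything to the joint law of the coalescent times of a randomly sampled pair, triple, and quadruple of tips in the standard (rate-one) conditioned Yule $n$-tree, and then to compute the three second-order quantities by first conditioning on the topology and branch lengths. First I would recall the standard backward description of the conditioned Yule $n$-tree: reading the tree from the tips toward the root, it passes through states with $k=n,n-1,\ldots,2$ lineages, and the time $T_k$ spent with exactly $k$ lineages is exponential with rate proportional to $k$ (the precise normalization giving $U_n=\sum_{k=2}^n T_k$ with $T_k$ independent, $T_k\sim\mathrm{Exp}(k-1)$ up to the rate convention fixed in Section \ref{col}). For a random pair $(i,j)$, the coalescent time $\tau_1^{(n)}$ equals $\sum_{k=m}^n T_k$ where $m$ is the (random) number of lineages present at the instant $i$ and $j$ merge; the combinatorics of the Yule process give the distribution of $m$, and this is exactly the single-pair analysis of \citet{MSteAMcK2001} and \citet{SSagKBar2012} that the paper cites. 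So $\var{\tau_1^{(n)}}$ comes from known first- and second-moment formulae for $\tau_1^{(n)}$, which I would express via harmonic numbers $H_n$ and $H_n^{(2)}$ and then expand asymptotically using the Appendix; the leading term $H_n^{(2)}\to\pi^2/6$ is what produces $\pi^2/6$.

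Next, for $\cov{\tau_1^{(n)}}{\tau_2^{(n)}}$ (a pair sharing one tip, say $\tau_{ij}$ and $\tau_{ik}$) and $\cov{\tau_1^{(n)}}{\tau_3^{(n)}}$ (two disjoint pairs $\tau_{ij}$ and $\tau_{kl}$), I would use the representation $\tau_r^{(n)}=\sum_{k=M_r}^n T_k$ where $M_r$ is the number of extant lineages at the merge of the corresponding pair. Then
\[
\cov{\tau_1^{(n)}}{\tau_2^{(n)}}=\E{\sum_{k=\max(M_1,M_2)}^n T_k^2}+\text{cross terms from }\cov{M_1}{M_2},
\]
more carefully, conditioning on $(M_1,M_2)$ and on the $T_k$'s: given the level variables, $\tau_1^{(n)}$ and $\tau_2^{(n)}$ share the block $\sum_{k\ge\max(M_1,M_2)}T_k$. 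So the key inputs are (i) the joint distribution of the merge-levels $(M_1,M_2)$ for a shared-tip pair and $(M_1,M_3)$ for a disjoint pair in the conditioned Yule $n$-tree, and (ii) $\E{T_k}$ and $\E{T_k^2}$. Item (ii) is elementary; item (i) is the real content. For the shared-tip case one can use the classical fact that in a Yule tree the three tips $i,j,k$ have a merge structure described by a random labelled history, and the level at which the first (of the three) coalescence happens, and then the second, have explicitly computable distributions; for the disjoint case one tracks two independent "lookdown" lineage-pairs and the level at which each pair's two lineages first fall into a common block. I would assemble these into exact finite-$n$ formulae in terms of $H_n$, $H_n^{(2)}$ (and possibly $\sum_k H_k/k$-type sums), then invoke the harmonic-number asymptotics from Appendix \ref{App} to extract the constants $2-\pi^2/6$ and $3-5\pi^2/18$ together with the $O(n^{-1}\log^2 n)$ remainder.

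The main obstacle, I expect, is step (i): getting clean closed-form expressions for the joint distribution of the two merge-levels for a shared-tip triple and for a disjoint quadruple in the \emph{conditioned} Yule $n$-tree, including the correct treatment of the conditioning on having exactly $n$ tips (which correlates the $T_k$'s with the topology in the naive unconditioned picture, though in the conditioned construction the $T_k$'s are independent of the ranked topology, which is the fact that makes the computation tractable). This is presumably why the paper isolates two further lemmata in Sections \ref{col} and \ref{prpr} before proving Proposition \ref{cypr}: one lemma to pin down the joint law of the relevant level-pairs (a combinatorial statement about ranked Yule histories), and one to do the summation/asymptotic bookkeeping. Once those are in hand, the proposition follows by substituting into the covariance decompositions above and collecting terms; the error terms are uniform because each exact expression is a fixed rational combination of $H_n$ and $H_n^{(2)}$ whose expansions have $O(n^{-1}\log^2 n)$ (indeed $O(n^{-1}\log n)$) tails, as recorded in the Appendix.
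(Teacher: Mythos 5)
You have reconstructed the paper's own strategy: write each coalescent time as $U^{(n)}_k=T_{k+1}+\cdots+T_n$ evaluated at a random splitting index, exploit the independence of the inter-speciation times from the ranked topology of the conditioned Yule $n$-tree, derive the laws of the relevant indices, and reduce all first and second moments to harmonic-number sums whose asymptotics are collected in Appendix \ref{App}. This is exactly how Sections \ref{col} and \ref{prpr} proceed, so the architecture is sound, and the first assertion $\var{\tau_1^{(n)}}={\pi^2\over6}+O(n^{-1}\log^2n)$ does indeed follow from the known single-pair law of $K_n$ together with the appendix identities.

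However, the proposal stops precisely where the work begins, and two concrete pieces are missing. First, the explicit distributions of $K_n,L_n,M_n$ and of the composite indices $K_{L_n},L_{M_n},K_{LM_n}$ (Lemmata \ref{Le1} and \ref{Le2}); these come from the backward recursion $P(K_n<k-1\mid K_n<k)=1-\binom{k}{2}^{-1}$ and its three- and four-tip analogues, followed by a convolution over the intermediate index, and without them none of the quantities $\E{H_{K_{L_n}}^2}$, $\E{\bar H_{K_{LM_n}}}$, and so on can be evaluated. Second, your covariance decomposition via $\max(M_1,M_2)$ glosses over the combinatorial step that actually produces the constants: for a shared-tip pair one must condition on whether the first coalescence among the three tips separates $\tau_1^{(n)}$ from $\tau_2^{(n)}$ or identifies them, which gives $\E{\tau_1^{(n)}\tau_2^{(n)}}=\tfrac13\E{(U^{(n)}_{K_{L_n}})^2}+\tfrac23\E{U^{(n)}_{K_{L_n}}U^{(n)}_{L_n}}$, and for two disjoint pairs one must enumerate the order of the three coalescent events among four tips, yielding the weights $\tfrac29,\tfrac49,\tfrac29,\tfrac19$ in front of the four products formed from $U^{(n)}_{M_n}$, $U^{(n)}_{L_{M_n}}$, $U^{(n)}_{K_{L_{M_n}}}$ (the paper's Lemma \ref{lem}). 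These rational weights, combined with the telescoping identity $U^{(n)}_{K_{L_n}}=U^{(n)}_{L_n}+U^{(L_n)}_{K_{L_n}}$ and the conditional independence of the two summands given $L_n$ (which is what converts mixed products into products of harmonic-number increments), are indispensable for arriving at the specific constants $2-{\pi^2\over6}$ and $3-{5\pi^2\over18}$. Once they are supplied, your final step is correct: every exact expression is a fixed rational combination of $H_n$ and $\bar H_n$, so the $O(n^{-1}\log^2 n)$ remainder comes out uniformly as you anticipate.
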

\begin{figure}
\begin{center}
\includegraphics[width=0.3\textwidth]{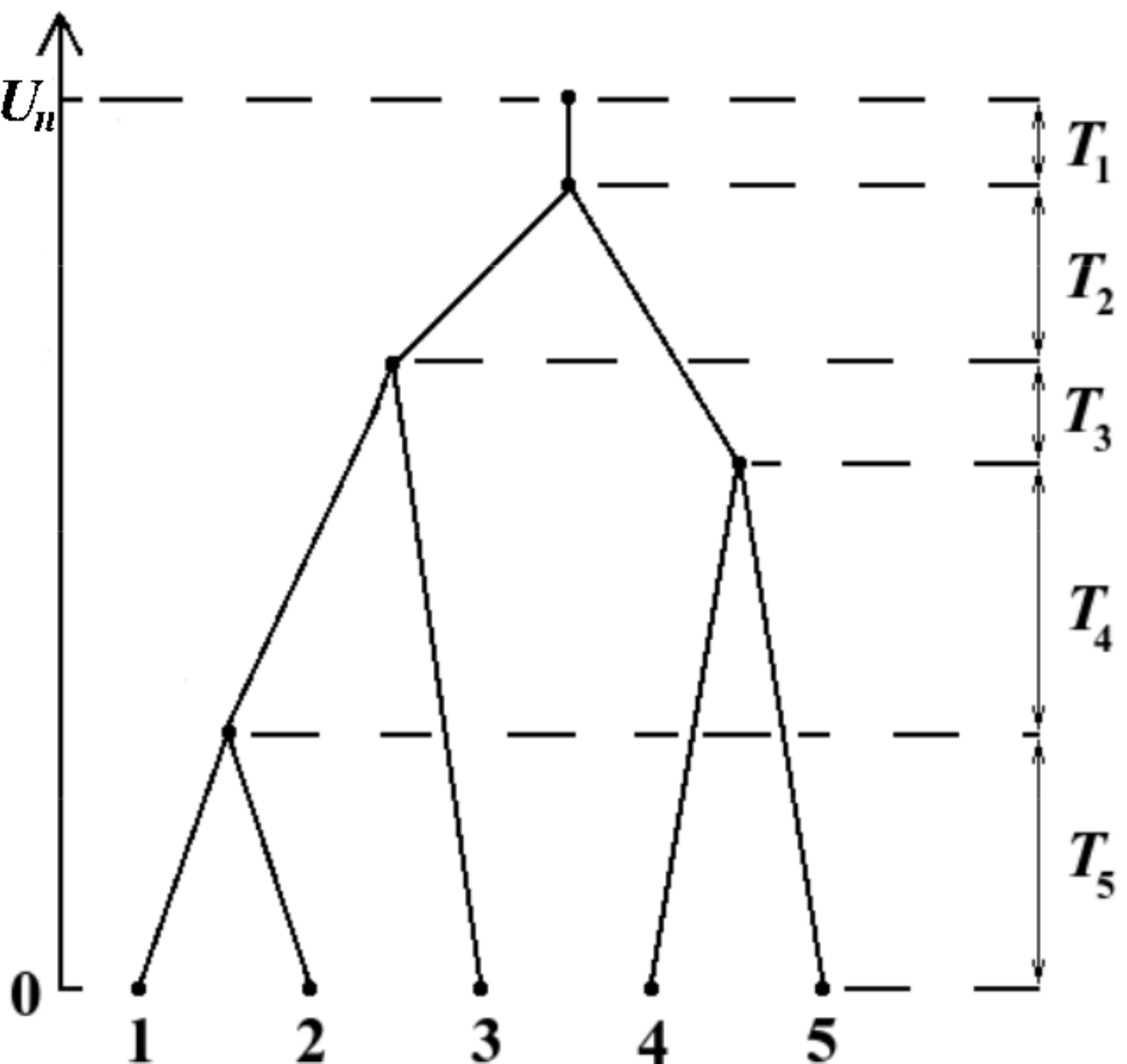}
\caption{An example of a standard Yule $n$-tree with $n=5$. 
The tree height is $U_{n}=T_{1}+\ldots+T_{n}$, where  $T_{i}$ are the times between the consecutive speciation events. The 10 pairwise 
coalescent times $\tau^{(n)}_{ij}$ for the tips of the tree are $\tau^{(5)}_{12}=T_{5}$, $\tau^{(5)}_{45}=T_{5}+T_{4}$, $\tau^{(5)}_{13}=\tau^{(5)}_{23}=T_{5}+T_{4}+T_{3}$, and $\tau^{(5)}_{14}=\tau^{(5)}_{24}=\tau^{(5)}_{34}=\tau^{(5)}_{15}=\tau^{(5)}_{25}=\tau^{(5)}_{35}=T_{5}+T_{4}+T_{3}+T_{2}$.
}\label{F3}
\end{center}
\end{figure}
Notice that the key  Proposition \ref{cypr} concerns only the first component of the evolutionary model we study -  the standard Yule $n$-tree.  For the standard Yule $n$-tree it is well known that the times between the consecutive speciation events $(T_{1},\ldots,T_n)$ are independent exponentials with parameters $(1,\ldots,n)$ respectively, see Fig. \ref{F3}. As shown in  \citet{TGer2008a}, this property corresponds to the unit rate Yule process conditioned on having  $n$  tips at the moment of observation, 
assuming that the time to the origin has the improper uniform prior
\citep[see also][]{WFel1971}.

\section{Coalescent indices of  the standard Yule $n$-tree}\label{col}

 Following the standard Yule $n$-tree from its root toward the  tips we label the consecutive splittings by indices $1,\ldots,n-1$: splitting $k$ is the vertex when $k-1$ branches turn into $k$ branches. We define three random splitting indices (as we interested in four randomly chosen tips out of $n$ available):
\begin{itemize}
\item $K_n$ is the  index of the splitting  where  two randomly chosen tips coalesce,
\item  $L_n$ be the  index of the splitting  where the first coalescent among three randomly chosen tips takes place,
\item $M_n$  be the  index of the splitting  where  the first coalescent among four randomly chosen tips takes place.
\end{itemize}
To avoid multilevel indices in the forthcoming formulae, we will often use the following notational convention 
\[KL_n:=K_{L_n},\quad LM_n:=L_{M_n},\quad KLM_n:=K_{LM_n}. \]
To illustrate these indices, turn to the Fig. \ref{F3}. If the two randomly chosen tips are $(1,2)$, then $K_n=4$. If the three randomly chosen tips are $(2,3,4)$, then $L_n=4$, $K_{L_n}=2 $. If the four randomly chosen tips are $(2,3,4,5)$, then $M_n=3$, $L_{M_n}=2$, $K_{LM_n}=1$.

The importance of these random indices comes from the following representations. Denote $U^{(n)}_k:=T_{k+1}+\ldots+T_n$ the sum of adjacent  times between splittings in the Yule tree. Clearly,
\begin{align}\label{imp}
\tau_1^{(n)}&\stackrel{d}{=}U^{(n)}_{K_n},\quad \tau_1^{(n)}\wedge\tau_2^{(n)}\stackrel{d}{=}U^{(n)}_{L_n},\quad \tau_1^{(n)}\vee\tau_2^{(n)}\stackrel{d}{=}U^{(n)}_{KL_n},\quad \tau_1^{(n)}\vee\tau_3^{(n)}\stackrel{d}{=}U^{(n)}_{KLM_n}.
\end{align}
To prove Proposition \ref{cypr} we need to know the distributions of these random splitting indices. The next two lemmata giving these distributions are proved in Section \ref{secl}.

\begin{lemma}  \label{Le1} 
 Then
\begin{align*}
P(K_{n}=k)&={n+1\over n-1}\cdot{2\over (k+1)(k+2)},\quad k=1,\ldots,n-1,\\
P(L_{n}=k)&={(n+1)(n+2)\over (n-1)(n-2)}\cdot{6(k-1)\over (k+1)(k+2)(k+3)},\quad k=2,\ldots,n-1,\\
P(M_{n}=k)&={(n+1)(n+2)(n+3)\over (n-1)(n-2)(n-3)}\cdot{12(k-1)(k-2)\over (k+1)(k+2)(k+3)(k+4)},\quad k=3,\ldots,n-1.
\end{align*}

\end{lemma}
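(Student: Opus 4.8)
The plan is to reduce all three distributions to a single combinatorial fact about the partition of the $n$ tips induced by the Yule tree, and then to one telescoping identity. \emph{Step 1 (the partition at a fixed lineage count).} Since the topology of a Yule tree does not depend on the splitting times, the standard Yule $n$-tree is generated by the rule ``at the $k$-th splitting bifurcate a uniformly chosen one of the $k$ current lineages'', $k=1,\ldots,n-1$. Call \emph{stage $j$} the period during which there are exactly $j$ lineages; at stage $j$ the tips split into $j$ non-empty blocks according to which lineage is ancestral to them. I claim this partition is such that each composition of $n$ into $j$ positive parts is equally likely as the ordered vector of block sizes. Indeed, at the start of stage $j$ each of the $j$ subtrees has one current lineage, and each of the remaining $n-j$ bifurcations falls into subtree $i$ with probability proportional to the current number of lineages of subtree $i$; this is precisely a P\'olya urn with $j$ colours, one ball per colour, run for $n-j$ draws, whose counts are uniform over compositions of $n$ into $j$ parts. (The $K_n$-marginal obtained below is classical, cf.\ \citet{MSteAMcK2001}.)

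\emph{Step 2 (from the partition to $K_n,L_n,M_n$).} For tips sampled without replacement the number of blocks they occupy is non-decreasing in $j$ and increases by at most one at each splitting, since a bifurcation can only split a single block. Hence, writing $\sigma_r(j)=P(r\text{ sampled tips lie in }r\text{ distinct blocks at stage }j)$, the first coalescence among $r$ sampled tips happens at splitting $k$ exactly when they occupy $r$ blocks at stage $k+1$ but fewer than $r$ at stage $k$, which gives the uniform description
$$P(K_n=k)=\sigma_2(k+1)-\sigma_2(k),\qquad P(L_n=k)=\sigma_3(k+1)-\sigma_3(k),\qquad P(M_n=k)=\sigma_4(k+1)-\sigma_4(k),$$
where the event $\{K_n=k\}$ is ``the two tips share a block at stage $k$ but not at stage $k+1$'' and $P(\text{share a block at stage }j)=1-\sigma_2(j)$. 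Conditioning on the composition and using exchangeability of its parts, $\sigma_r(j)=\binom nr^{-1}\E{e_r(N_1,\ldots,N_j)}=\binom nr^{-1}\binom jr\,\E{N_1\cdots N_r}$, with $e_r$ the $r$-th elementary symmetric polynomial, and a short generating-function computation — the coefficient of $x^{n}$ in $x^{j}(1-x)^{-(j+r)}$ — gives $\E{N_1\cdots N_r}=\binom{n+r-1}{j+r-1}\big/\binom{n-1}{j-1}$, so that
$$\sigma_r(j)=\frac{(n+1)(n+2)\cdots(n+r-1)}{(n-1)(n-2)\cdots(n-r+1)}\cdot\frac{(j-1)(j-2)\cdots(j-r+1)}{(j+1)(j+2)\cdots(j+r-1)}.$$

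\emph{Step 3 (the telescoping difference).} Factoring out the factors common to $\sigma_r(k+1)$ and $\sigma_r(k)$ reduces each difference to the bracket $\frac{k}{k+r}-\frac{k-r+1}{k+1}$, whose numerator collapses to the constant $r(r-1)$; putting $r=2,3,4$ reproduces exactly the three stated formulas on the stated ranges of $k$, and $\sum_k P(K_n=k)=\sigma_2(n)-\sigma_2(1)=1$ (likewise for $L_n,M_n$) is a built-in check. I expect the only real obstacle to be Step 1: one must be sure the operative rule is ``bifurcate a uniformly random \emph{current} lineage'' and not some size-biased rule on eventual subtree sizes — the latter tempting shortcut produces the wrong distribution — and must verify that this rule does give the uniform-composition law. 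Once that structural fact is available, everything else is bookkeeping with binomial coefficients, the one point needing care being the off-by-one between a splitting index $k$ and ``stage $k$ versus stage $k+1$''.
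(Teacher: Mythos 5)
Your argument is correct, and it reaches the stated formulas by a genuinely different route from the paper's. Both proofs in the end compute the survival function $\sigma_r(j)=P(\text{the }r\text{ sampled tips occupy }r\text{ distinct branches when there are }j\text{ branches})$ and obtain the point masses by differencing, $\sigma_r(k+1)-\sigma_r(k)$; the difference lies in how $\sigma_r$ is found. The paper works backward one splitting at a time, asserting that, given the sampled tips sit in distinct branches among the current $k$, the merging pair hits the sampled tips with probability $1/\binom k2$, $3/\binom k2$, $6/\binom k2$ for $r=2,3,4$, and then telescopes the product of the conditional probabilities $P(\cdot<k-1\mid\cdot<k)$. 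You instead get $\sigma_r(j)$ in closed form in one shot from the P\'olya-urn fact that the descendant counts $(N_1,\ldots,N_j)$ of the $j$ branches are uniform over compositions of $n$ into $j$ positive parts, via $\sigma_r(j)=\binom nr^{-1}\binom jr\E{N_1\cdots N_r}$ and $\E{N_1\cdots N_r}=\binom{n+r-1}{j+r-1}\big/\binom{n-1}{j-1}$; I checked that this gives $\sigma_2(j)=\frac{(n+1)(j-1)}{(n-1)(j+1)}$, matching the paper's $P(K_n<k)$, and that your bracket does collapse to $r(r-1)=2,6,12$, reproducing all three formulas on the correct ranges. Your approach buys a uniform treatment of all $r$ at once and makes explicit the structural fact about the Yule topology that the paper hides behind ``it is easy to see''; the paper's recursion is shorter but rests on the unproved exchangeability claim that the backward-merging pair is uniform and independent of where the sampled tips sit. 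The two points you rightly flag as needing care---that the operative rule is ``bifurcate a uniformly chosen \emph{current} lineage'' (which does yield the uniform-composition law), and the off-by-one between splitting index $k$ and stages $k$ versus $k+1$ (your convention is the one consistent with the paper's Fig.~\ref{F3} example, despite the wording of the paper's definition of ``splitting $k$'')---are both handled correctly.
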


\begin{lemma} \label{Le2}
The random numbers $K_{L_n}, L_{M_n}, K_{LM_n}$ have the following distributions
\begin{align*}
P(K_{L_n}=k)&
={n+1\over (n-1)(n-2)}\cdot{12\over (k+1)(k+2)}\Big({n+2\over k+3}-1\Big),\quad k=1,\ldots,n-2,\\
P(L_{M_n}=k)&
={(n+1)(n+2)\over (n-1)(n-2)(n-3)}\cdot{72(k-1)\over (k+1)(k+2)(k+3)}\Big({n+3\over k+4}-1\Big),\quad k=2,\ldots,n-2,
\\
P(K_{LM_n}=k)
&={n+3\over (n-1)(n-2)(n-3)}\cdot{72\over (k+1)(k+2)}\Big({(n+1)(n+2)\over (k+3)(k+4)}-1\Big)\\
&\qquad-{n+2\over (n-1)(n-2)(n-3)}\cdot{144\over (k+1)(k+2)}\Big({n+1\over k+3}-1\Big),\quad k=1,\ldots,n-3.
\end{align*}
\end{lemma}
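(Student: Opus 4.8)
\emph{Proof plan.} The plan is to obtain each of the three laws by conditioning on the coarser index that governs it --- $L_n$ in the case of $K_{L_n}$, and $M_n$ in the case of $L_{M_n}$ and $K_{LM_n}$ --- and then reducing to the distributions already supplied by Lemma \ref{Le1}. The structural ingredient is the branching/exchangeability property of the conditioned Yule process: for each fixed $j\le n$ the portion of the standard Yule $n$-tree lying between the root and the state with $j$ lineages is a standard Yule $j$-tree, and, given its shape, the $n$ tips are distributed over the $j$ lineages as a uniformly random composition of $n$ into $j$ parts. In particular the $j$ lineages are exchangeable, so that for a uniformly chosen sample of tips, the sub-collection of lineages it occupies at state $j$ is --- conditionally on its cardinality --- a uniformly random sub-collection of that cardinality, and is independent of the shape of the top tree.

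Take $K_{L_n}$ first. By construction, at the state with $L_n$ lineages the three sampled tips lie in exactly two lineages, and $K_{L_n}$ is the index of the split inside the top tree at which those two lineages coalesce. Conditioning on $\{L_n=l\}$, the structural ingredient turns the top tree into a standard Yule $l$-tree and the occupied pair into a uniformly random pair of its leaves; the additional information carried by $\{L_n=l\}$ --- that one designated lineage at state $l$ is the next to undergo a split --- does not break the symmetry, since each of the $l$ lineages splits next with the same probability. Hence $K_{L_n}\mid\{L_n=l\}\stackrel{d}{=}K_l$ and, by the law of total probability,
\[
P(K_{L_n}=k)=\sum_{l=k+1}^{n-1}P(L_n=l)\,P(K_l=k).
\]
Inserting the formulae for $P(L_n=l)$ and $P(K_l=k)$ from Lemma \ref{Le1}, the factors $l\pm1$ cancel and the summand collapses to a constant (in $l$) multiple of $\tfrac{1}{l+2}-\tfrac{1}{l+3}$; summing the telescoping series over $l$ gives exactly the asserted expression.

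The other two laws are handled identically, with $M_n$ in the role of the conditioning index. Given $\{M_n=m\}$, the four sampled tips sit in three lineages at state $m$, which by exchangeability form a uniformly random triple of leaves of a standard Yule $m$-tree, and the last two coalescences of the quadruple are precisely the first coalescence and the overall MRCA of that triple; therefore $L_{M_n}\mid\{M_n=m\}\stackrel{d}{=}L_m$ and $K_{LM_n}\mid\{M_n=m\}\stackrel{d}{=}K_{L_m}$, so that
\[
P(L_{M_n}=k)=\sum_{m}P(M_n=m)\,P(L_m=k),\qquad P(K_{LM_n}=k)=\sum_{m}P(M_n=m)\,P(K_{L_m}=k).
\]
For the second identity one feeds in the expression for $P(K_{L_m}=k)$ obtained in the previous step (it is valid for every size $m$ and rests only on Lemma \ref{Le1}, so there is no circularity). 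Substituting the Lemma \ref{Le1} formulae and telescoping --- each sum reduces, after the rational prefactors cancel, to one of the shape $\sum_m \tfrac{1}{(m+a)(m+a+1)}$ or $\sum_m \tfrac{1}{(m+a)(m+a+1)(m+a+2)}$, evaluated by partial fractions --- produces the two displayed formulae; the two-term shape of $P(K_{LM_n}=k)$ appears automatically because the factor $\tfrac{m+2}{k+3}-1$ inside $P(K_{L_m}=k)$ splits the sum into two telescoping pieces.

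The one genuinely delicate point, used repeatedly above, is the structural ingredient: that conditioning on a prescribed value of $L_n$ (respectively $M_n$) still leaves the top tree a standard Yule tree with the sampled lineages uniformly placed among its leaves. I would isolate this as a short lemma about the conditioned Yule process and prove it from the Markov branching property together with the classical fact that at the state with $j$ lineages the vector of descendant-tip counts is a uniformly random composition of $n$ into $j$ parts, independent of the shape of the top tree. Granting that, the rest is routine summation.
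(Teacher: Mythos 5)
Your proposal is correct and follows essentially the same route as the paper: condition on the coarser splitting index, use the law of total probability to write each target distribution as $\sum P(\cdot)P(\cdot)$ with the factors supplied by Lemma \ref{Le1}, and evaluate the resulting telescoping sums (the paper writes the third law as $\sum_m P(L_{M_n}=m)P(K_m=k)$ rather than your $\sum_m P(M_n=m)P(K_{L_m}=k)$, but this is an immaterial variation giving the same two-term expression). Your extra care in justifying the conditional identities $K_{L_n}\mid\{L_n=l\}\stackrel{d}{=}K_l$ etc.\ via exchangeability of the lineages in the top subtree is a point the paper passes over with ``Clearly,'' and is a welcome addition rather than a deviation.
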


\section{Proof of  Proposition \ref{cypr}}\label{prpr}
In view of \eqref{imp}, the harmonic numbers
\be\label{harm}
H_n=\sum_{k=1}^{n}k^{-1},\qquad \bar H_n=\sum_{k=1}^{n}k^{-2},
\ee
play an important role in our calculations as 
 \begin{align*}
\E{U^{(n)}_k}&=H_n-H_k,\\
\E{(U^{(n)}_k)^2}&=\bar H_n-\bar H_{k}+(H_n-H_{k})^2=H_n^2-2H_nH_k+\bar H_n+H_k^2-\bar H_{k}.
\end{align*}

\begin{lemma}\label{lem} 
We have
 \begin{align*}
\E{\tau_1^{(n)}}&=H_n-\E{H_{K_n}},\\
\E{(\tau_{1}^{(n)})^2}&=H_n^2-2H_n\E{H_{K_n}}+\bar H_n+\E{H_{K_n}^2-\bar H_{K_n}},\\
\E{\tau_{1}^{(n)}\tau_{2}^{(n)}}
&=H_n^2-H_n\E{{2H_{L_n}+4H_{KL_n}\over3}}+\bar H_n+\E{{2H_{L_n}H_{KL_n}+H_{KL_n}^2-2\bar H_{L_n}-\bar H_{KL_n}\over3}},\\
\E{\tau_{1}^{(n)}\tau_{3}^{(n)}}
&=H_n^2-H_n\E{{3H_{M_n}+5H_{LM_n}+10H_{KLM_n}\over9}}+\bar H_n-{1\over3}\E{\bar H_{M_n}}\\
&\qquad +{1\over9}\E{H_{M_n}H_{LM_n}+2H_{M_n}H_{KLM_n}+4H_{LM_n}H_{KLM_n}+2H_{KLM_n}^2-4\bar H_{LM_n}-2\bar H_{KLM_n}}.
\end{align*}
\end{lemma}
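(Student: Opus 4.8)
The plan is to reduce each of the four moments to an expectation of a function of the random splitting indices of Section~\ref{col}, using the distributional identities \eqref{imp}, the moment formulas for $U^{(n)}_k$ recorded just above, and one structural property of the standard Yule $n$-tree: the inter-split times $(T_1,\dots,T_n)$ are independent exponentials with rates $(1,\dots,n)$, and are independent of the tree topology and hence of all the indices $K_n,L_n,M_n,KL_n,LM_n,KLM_n$. Thus, conditionally on any collection of these indices, $U^{(n)}_k=T_{k+1}+\dots+T_n$ is still a sum of independent exponentials with $\E{U^{(n)}_k}=H_n-H_k$ and $\E{(U^{(n)}_k)^2}=\bar H_n-\bar H_k+(H_n-H_k)^2$, and for $a<b$ the blocks $T_{a+1}+\dots+T_b$ and $U^{(n)}_b$ are conditionally independent, so that $\E{U^{(n)}_aU^{(n)}_b}=H_n^2-H_n(H_a+H_b)+H_aH_b+\bar H_n-\bar H_b$. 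The first two identities then follow at once: by \eqref{imp}, $\tau_1^{(n)}\stackrel{d}{=}U^{(n)}_{K_n}$, so conditioning on $K_n$ and averaging gives the formulas for $\E{\tau_1^{(n)}}$ and $\E{(\tau_1^{(n)})^2}$.

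For $\E{\tau_1^{(n)}\tau_2^{(n)}}$ the two pairs share a tip, so only the sampled triple matters; among its three pairwise coalescent times two coincide, equal to the depth $U^{(n)}_{KL_n}$ of the older coalescence, while the third equals $U^{(n)}_{L_n}$. By exchangeability of the sampled tips, conditionally on $(KL_n,L_n)$ the cherry of the triple is uniform over its three pairs, so with probability $2/3$ the product $\tau_1^{(n)}\tau_2^{(n)}$ equals $U^{(n)}_{L_n}U^{(n)}_{KL_n}$ and with probability $1/3$ it equals $(U^{(n)}_{KL_n})^2$; hence
\[
\E{\tau_1^{(n)}\tau_2^{(n)}}=\tfrac23\E{U^{(n)}_{L_n}U^{(n)}_{KL_n}}+\tfrac13\E{\bigl(U^{(n)}_{KL_n}\bigr)^2}.
\]
Substituting $\E{U^{(n)}_{L_n}U^{(n)}_{KL_n}\mid KL_n,L_n}$ and $\E{(U^{(n)}_{KL_n})^2\mid KL_n}$ from the displayed formulas and collecting the $H_n^2$-, $H_n$-, $\bar H_n$- and product-terms gives the third identity.

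For $\E{\tau_1^{(n)}\tau_3^{(n)}}$ the two pairs are disjoint, so all four sampled tips participate, with coalescences at levels $KLM_n<LM_n<M_n$. Averaging over the random labelling amounts to averaging $\tfrac13(\tau_{W_1W_2}\tau_{W_3W_4}+\tau_{W_1W_3}\tau_{W_2W_4}+\tau_{W_1W_4}\tau_{W_2W_3})$ over the three perfect matchings of a fixed four-subset. The induced ranked shape is either balanced --- two cherries joining at levels $M_n$ and $LM_n$, with their ancestor at $KLM_n$ --- or a caterpillar --- a cherry at $M_n$, a third tip joining at $LM_n$, the fourth at $KLM_n$ --- and conditionally on the three levels it is balanced with probability $1/3$ (again by exchangeability). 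Checking the three matchings in each shape, the matching-sum equals $U^{(n)}_{M_n}U^{(n)}_{LM_n}+2(U^{(n)}_{KLM_n})^2$ in the balanced case and $U^{(n)}_{M_n}U^{(n)}_{KLM_n}+2U^{(n)}_{LM_n}U^{(n)}_{KLM_n}$ in the caterpillar case, so that
\[
\E{\tau_1^{(n)}\tau_3^{(n)}}=\tfrac19\E{U^{(n)}_{M_n}U^{(n)}_{LM_n}+2U^{(n)}_{M_n}U^{(n)}_{KLM_n}+4U^{(n)}_{LM_n}U^{(n)}_{KLM_n}+2\bigl(U^{(n)}_{KLM_n}\bigr)^2}.
\]
Conditioning on $(KLM_n,LM_n,M_n)$ and evaluating the four conditional expectations exactly as before yields the last identity.

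The routine part is the algebra of substituting the exponential moment formulas and collecting terms; the real work is the combinatorial bookkeeping for the four-tip product, and in particular justifying that the ranked shape of the induced four-tip subtree is independent of the levels $KLM_n,LM_n,M_n$ at which its coalescences sit in the big tree (and balanced with probability $1/3$), since this is what lets the $1/3$--$2/3$ split --- and, in miniature, the $2/3$--$1/3$ split for three tips --- be taken inside the expectation. The resulting expectations of harmonic numbers of the splitting indices are then the quantities evaluated, via Lemmata \ref{Le1} and \ref{Le2}, in the proof of Proposition \ref{cypr}.
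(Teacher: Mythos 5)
Your proposal is correct and follows essentially the same route as the paper: the same $\tfrac13$--$\tfrac23$ split for the shared-tip case and the same weights $(\tfrac29,\tfrac49,\tfrac29,\tfrac19)$ on $(U^{(n)}_{KLM_n})^2$, $U^{(n)}_{LM_n}U^{(n)}_{KLM_n}$, $U^{(n)}_{M_n}U^{(n)}_{KLM_n}$, $U^{(n)}_{M_n}U^{(n)}_{LM_n}$ for the disjoint-pair case, which the paper obtains by sequential across/within-pair conditioning rather than your average over matchings. The only cosmetic difference is the final algebra: you substitute the cross-moment $\E{U^{(n)}_aU^{(n)}_b}=H_n^2-H_n(H_a+H_b)+H_aH_b+\bar H_n-\bar H_b$ ($a<b$) directly, whereas the paper telescopes via $U^{(n)}_{KL_n}=U^{(n)}_{L_n}+U^{(L_n)}_{KL_n}$ and reuses the three-tip representation; both rest on the same independence of the inter-split times from the topology and the sampled indices.
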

In view of
Lemma \ref{lem}, proven in Section \ref{secl}, the asymptotic results stated in Proposition \ref{cypr} are computed using the following  relations involving the harmonic numbers \eqref{harm}. 
\begin{lemma}\label{Lh1} 
 We have  as $n\to\infty$
 \begin{align*}
\E{H_{K_n}}&=2+O(n^{-1}\log n),\quad \E{H_{L_n}}=3+O(n^{-1}\log n),\quad \E{H_{M_n}}={11\over3}+O(n^{-1}\log n),\\
\E{H_{KL_n}}&={3\over2}+O(n^{-1}\log n),\quad \E{H_{LM_n}}={7\over3}+O(n^{-1}\log n),\quad \E{H_{KLM_n}}={4\over3}+O(n^{-1}\log n).
\end{align*}
\end{lemma}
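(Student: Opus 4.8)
The plan is to compute each of the six expectations straight from the explicit probability mass functions in Lemmata \ref{Le1} and \ref{Le2}, reducing every computation to a small fixed list of universal series $S_{\mathbf a}=\sum_{k\ge1}H_k\big/\big((k+a_1)\cdots(k+a_m)\big)$ whose exact values and truncation tails follow from the elementary harmonic-number identities of Appendix \ref{App}.

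For a splitting index $J$ whose mass function has the form $P(J=k)=\pi_n\,q(k)$ on its range, where $\pi_n$ is the rational-in-$n$ prefactor (e.g. $\tfrac{n+1}{n-1}$, $\tfrac{(n+1)(n+2)}{(n-1)(n-2)}$, and so on, all satisfying $\pi_n=1+O(n^{-1})$), I would start from $\E{H_J}=\pi_n\sum_k H_k\,q(k)$ and expand $q(k)$ into partial fractions, grouping consecutive ones into telescoping differences $\tfrac1{(k+a)(k+a+1)}-\tfrac1{(k+a+1)(k+a+2)}$ where convenient. This writes $\E{H_J}$ as a finite linear combination of truncated series $\sum_{k=k_0}^{n-c}H_k\big/\prod_i(k+a_i)$. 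For $K_{L_n}$, $L_{M_n}$ and $K_{LM_n}$ some prefactors grow like $n$ or $n^2$; this is matched by series whose denominators carry correspondingly higher degree, so the products stay bounded as $n\to\infty$.

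Each series $S_{\mathbf a}$ is then evaluated by Abel summation, using $H_k-H_{k-1}=k^{-1}$: this always leaves behind an ordinary telescoping sum of the type $\sum_{k\ge1}\tfrac1{k(k+j)}=H_j/j$, so one gets, for instance, $\sum_{k\ge1}\tfrac{H_k}{(k+1)(k+2)}=1$, $\sum_{k\ge1}\tfrac{H_k}{(k+2)(k+3)}=\tfrac34$, $\sum_{k\ge1}\tfrac{H_k}{(k+1)(k+2)(k+3)}=\tfrac18$, and so forth. Substituting these exact values and simplifying the rational prefactors yields the six constants $2$, $3$, $\tfrac{11}{3}$, $\tfrac32$, $\tfrac73$, $\tfrac43$. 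For the remainder: replacing a truncated series by $S_{\mathbf a}$ costs a tail $\sum_{k\ge n}H_k\big/\prod_i(k+a_i)$, which (after any polynomial numerator is absorbed) behaves like $\sum_{k\ge n}\log k\,/\,k^2=O(n^{-1}\log n)$ for the denominators that occur for $K_n$, $L_n$, $M_n$, and is even smaller for the higher-degree cases; together with the $O(n^{-1})$ error from $\pi_n-1$ this gives the claimed $O(n^{-1}\log n)$ remainder for all six indices.

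The only real difficulty is the bookkeeping for $K_{LM_n}$: its mass function in Lemma \ref{Le2} is a difference of two rather intricate rational expressions with polynomial numerators of size up to order $n^2$, so one must carry several partial-fraction pieces through the computation at once, check that the growing parts combine into a finite limit (here $72\,S_{1,2,3,4}=72/54=4/3$) rather than diverging or cancelling to zero, and confirm that no error term exceeds $O(n^{-1}\log n)$. A lesser point of care is to state the Appendix harmonic-sum identities sharply enough to read off the exact constants $\tfrac{11}{3}$, $\tfrac73$, $\tfrac43$, not merely their orders of magnitude.
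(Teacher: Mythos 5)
Your plan is correct and follows essentially the same route as the paper: expand the explicit mass functions from Lemmata \ref{Le1}--\ref{Le2} in partial fractions, evaluate the resulting sums $\sum_k H_k/\prod_i(k+a_i)$ via the Appendix identities (the paper derives them by interchanging the order of summation, which is your Abel-summation step in disguise), and bound the truncation tails by $O(n^{-1}\log n)$. The constants you quote ($\sum_{k\ge1}H_k/((k+1)(k+2)(k+3))=\tfrac18$, $72\sum_{k\ge1}H_k/\prod_{i=1}^4(k+i)=\tfrac43$) check out against Lemma \ref{L4}, so nothing is missing.
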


\begin{lemma}\label{Lh2} Let $a_n\rightrightarrows a$ stand for $a_n= a+O(n^{-1}\log^2n)$ as $n\to\infty$. Then
%
\[
\begin{array}{lll}
\E{H_{K_n}^2}\rightrightarrows\frac{\pi^{2}}{3}+2,  &  \E{H_{L_n}^2}\rightrightarrows{21\over 2}, & \E{H_{M_n}^2}\rightrightarrows\frac{\pi^{2}}{3}+{211\over 18},  \\
\E{\bar H_{K_n}}\rightrightarrows\frac{\pi^{2}}{3}-2,  &  \E{\bar H_{L_n}}\rightrightarrows{3\over 2}, &   \E{\bar H_{M_n}}\rightrightarrows\frac{\pi^{2}}{3}-{31\over 18},
\end{array}
\]
and
\[
\begin{array}{lll}
 \E{H_{KL_n}^2}\rightrightarrows \frac{\pi^{2}}{2}-{9\over4}, &\E{H_{LM_n}^2}\rightrightarrows {167\over 18}-\frac{\pi^{2}}{3}, &\E{H_{KLM_n}^2}\rightrightarrows\frac{2\pi^{2}}{3}-{41\over 9},      \\
\E{\bar H_{KL_n}}\rightrightarrows\frac{\pi^{2}}{2}-{15\over4},\quad
&\E{\bar H_{LM_n}}\rightrightarrows {85\over 18}-\frac{\pi^{2}}{3},\quad
&\E{\bar H_{KLM_n}}\rightrightarrows\frac{2\pi^{2}}{3}-{49\over 9},\quad
\end{array}
\]
and 
\[
\begin{array}{ll}
\E{H_{L_n}H_{KL_n}}\rightrightarrows  {39\over4}-\frac{\pi^{2}}{2}, &\E{H_{M_n}H_{LM_n}}\rightrightarrows {221\over 18}-\frac{\pi^{2}}{3},     \\
\E{H_{M_n}H_{KLM_n}}\rightrightarrows\frac{2\pi^{2}}{3}-{14\over 9},
&\E{H_{LM_n}H_{KLM_n}}\rightrightarrows {148\over9}-{4\pi^2\over 3}.
\end{array}
\]

\end{lemma}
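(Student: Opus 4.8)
The goal is Lemma~\ref{Lh2}, a collection of asymptotic identities of the form $a_n=a+O(n^{-1}\log^2n)$ for the first and second moments of $H_X$ and $\bar H_X$, where $X$ ranges over the six random splitting indices $K_n,L_n,M_n,K_{L_n},L_{M_n},K_{LM_n}$ whose distributions are given explicitly in Lemmata~\ref{Le1} and~\ref{Le2}. The plan is to reduce every entry to a finite sum against a rational probability weight and then to apply a small toolbox of asymptotic estimates for weighted harmonic sums, collected in Appendix~\ref{App}.

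The first step is to record the generic computation. For a discrete random index $X$ with $P(X=k)=p_n(k)$ supported on $k\le n-c$, where $p_n(k)=c_n\cdot q(k)\,r_n(k)$ with $c_n\to 1$ and $r_n(k)=\prod_j\frac{n+a_j}{k+b_j}-(\text{lower order})$ coming from the formulae in Lemmata~\ref{Le1}--\ref{Le2}, I would write $\E{f(H_X)}=\sum_k f(H_k)\,p_n(k)$ for $f(x)=x$, $x^2$ and then $\E{\bar H_X}=\sum_k \bar H_k\, p_n(k)$. Since $H_k=\log k+\gamma+O(k^{-1})$ and $\bar H_k=\tfrac{\pi^2}{6}-k^{-1}+O(k^{-2})$, each sum splits into (i) a ``bulk'' piece where the large-$k$ asymptotics of $H_k,\bar H_k$ is substituted, (ii) the error incurred by that substitution, which is controlled by $\sum_k k^{-1}p_n(k)=O(n^{-1}\log n)$ or better because the weights $p_n(k)$ decay like $k^{-3}$ (for $K_n$), $k^{-3}$ (for $L_n$ after the factor $k-1$), etc., and (iii) tail corrections from the difference between $\sum_{k=1}^{n-c}$ and $\sum_{k=1}^{\infty}$, again $O(n^{-1})$ by the polynomial decay of the summands. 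The main arithmetic content is then evaluating closed forms or sharp asymptotics for sums such as $\sum_k \frac{\log k}{(k+1)(k+2)}$, $\sum_k \frac{(\log k)^2}{(k+1)(k+2)}$, $\sum_k \frac{(k-1)\log k}{(k+1)(k+2)(k+3)}$, and their analogues with the extra $\bigl(\tfrac{n+a}{k+b}-1\bigr)$ factor present in the $KL_n,LM_n,KLM_n$ cases; the $\log^2$ terms are exactly what produces the $\pi^2/6$ (equivalently $\zeta(2)$) constants, via identities like $\sum_{k\ge1}\frac{1}{k(k+1)}\bigl(H_{k}\text{-type terms}\bigr)$ telescoping against $\bar H$-sums. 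I would farm all of these individual sum evaluations out to the appendix lemmas on harmonic numbers and simply cite them.

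A cleaner organizing device, which I would use to keep the bookkeeping honest, is the representation already implicit in Section~\ref{prpr}: because $U^{(n)}_k=T_{k+1}+\dots+T_n$ with independent $T_j\sim\mathrm{Exp}(j)$, one has $\E{H_X}=H_n-\E{U^{(n)}_X}$ up to nothing (that identity is exact only in distribution for $\tau$, so instead) — more usefully, I would compute $\E{H_{K_n}}$ directly from $P(K_n=k)$ and note the pattern $H_n-\E{H_{K_n}}\to$ a constant, matching Lemma~\ref{Lh1}, and then get the second moments by the same mechanism one power up. For the composed indices $K_{L_n}$ etc., the distributions in Lemma~\ref{Le2} are already ``pre-summed'' (the factor $\tfrac{n+2}{k+3}-1$ is the result of summing a geometric-like tail over the outer index), so no extra nesting of sums is needed — each is again a single sum against a rational weight, and the same three-step bulk/substitution-error/tail-error decomposition applies verbatim, the only novelty being that the weight now has a piece decaying like $k^{-2}$ (the $\tfrac{n+a}{k+b}$ part) which contributes an $n^{-1}\log^2 n$ term rather than a constant, consistently with the stated error order. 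The cross-moments $\E{H_{L_n}H_{KL_n}}$ and the three with $M$-indices require the joint law of the pair, but the joint structure is exactly what Lemmata~\ref{Le1}--\ref{Le2} encode (e.g. $K_{L_n}$ conditioned on $L_n$), so $\E{H_{L_n}H_{KL_n}}=\E{H_{L_n}\,\E{H_{KL_n}\mid L_n}}$ and the inner conditional expectation is another rational-weight harmonic sum whose asymptotics feed into the outer one.

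The hard part will be purely the volume and care of the constant-chasing: there are eighteen separate asymptotic identities, each with a rational $\pi^2/6$ coefficient and a rational additive constant that must come out exactly right, and the cancellations (e.g. the $\pi^2$-coefficients that must combine to give the clean $1+\pi^2/6$ in Theorem~\ref{thmVar} via Lemma~\ref{lem}) leave no room for sign errors. I would therefore prove a single master estimate once — for $\sum_{k=k_0}^{n-c}\frac{P(k)}{Q(k)}H_k^m$ and $\sum\frac{P(k)}{Q(k)}\bar H_k$ with $\deg Q-\deg P\ge 2$, giving the limit as an explicit convergent series plus an $O(n^{-1}\log^m n)$ remainder — and then instantiate it eighteen times, pushing each resulting numerical series (things like $\sum_{k\ge1}\frac{\log k}{(k+1)(k+2)}$ expressed through $\gamma$ and $\zeta(2)$) into the appendix. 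Everything else is mechanical substitution of Lemmata~\ref{Le1} and~\ref{Le2} and algebra, which I would not spell out in the proof text beyond one representative case.
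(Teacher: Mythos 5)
Your overall architecture (reduce each entry to a single sum of $H_k^2$ or $\bar H_k$ against the explicit rational weights of Lemmata \ref{Le1}--\ref{Le2}, handle the composed indices via the tower property $\E{H_{L_n}H_{KL_n}}=\E{H_{L_n}\E{H_{K_{L_n}}\mid L_n}}$, and control tails by the polynomial decay of the weights) matches the paper. But the central analytic step you propose --- substitute $H_k=\log k+\gamma+O(k^{-1})$ and $\bar H_k=\pi^2/6-k^{-1}+O(k^{-2})$ into the ``bulk'' and bound the substitution error by $\sum_k k^{-1}p_n(k)=O(n^{-1}\log n)$ --- fails. Because the weights $p_n(k)$ decay like $k^{-2}$ (for $K_n$) or faster, the quantity $\sum_k k^{-1}p_n(k)$ converges to a \emph{positive constant}, not to zero; for $K_n$ it tends to $\sum_{k\ge1}2/(k(k+1)(k+2))=1$. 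Equivalently, the limits in Lemma \ref{Lh2} are values of convergent Euler-type series and therefore depend on the exact values of $H_k$ for \emph{small} $k$, not merely on their large-$k$ asymptotics; replacing $H_k$ by $\log k+\gamma$ shifts the limit by a nonzero constant, so your ``bulk'' would produce the wrong numbers. Worse, the series your substitution generates, such as $\sum_{k\ge1}(\log k)^2/((k+1)(k+2))$, are not expressible through $\gamma$ and $\zeta(2)$ (they involve Stieltjes constants), so the appendix evaluations you intend to cite do not exist in that form.

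The paper's proof avoids this entirely: it keeps $H_k$ and $\bar H_k$ intact, uses the partial-fraction identities \eqref{64}--\eqref{65} to reduce every weight to combinations of $1/((k+m)(k+m+1))$, and then invokes the \emph{exact} finite-$n$ identities of Appendix Lemmata \ref{L4} and \ref{L2}, which are proved by interchanging the order of summation and rest on Euler's identity $2\sum_{k=1}^m H_k/k=H_m^2+\bar H_m$. The $\pi^2/6$ constants enter through $\bar H_n\to\zeta(2)$ in those closed forms, and the $O(n^{-1}\log^2 n)$ rate is read off from the explicit remainders. Your treatment of the cross moments by conditioning on the outer index is correct and is exactly what the paper does (it substitutes the exact formulae for $\E{H_{K_m}}$, $\E{H_{L_m}}$, $\E{H_{K_{L_m}}}$ inside the outer sum); to repair the rest, replace the bulk/substitution-error decomposition by these exact weighted-harmonic-sum identities.
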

The proofs of the last two lemmata are given in Section \ref{h12} using the auxiliary results from Appendix \ref{App}. 

With Lemmata  \ref{lem} - \ref{Lh2} at hand, the remaining proof of  Proposition \ref{cypr} is straightforward. The first statement 
\begin{align*}
\Var{\tau_{1}^{(n)}}&=\E{(\tau_{1}^{(n)})^2}-\left(\E{\tau_1^{(n)}}\right)^2=\bar H_n+\E{H_{K_n}^2-\bar H_{K_n}} -(\E{H_{K_n}})^2\rightrightarrows{\pi^2\over6},
\end{align*}
is obtained applying  the classical relation $\bar H_n={\pi^2\over6}+O(n^{-1})$.  Further, Lemma  \ref{lem} yields
\begin{align*}
\cov{\tau_{1}^{(n)}}{\tau_{2}^{(n)}}&=\E{\tau_{1}^{(n)}\tau_{2}^{(n)}}-\left(\E{\tau_1^{(n)}}\right)^2=H_n\E{2H_{K_n}-{2H_{L_n}+4H_{KL_n}\over3}} \\
&\quad +\bar H_n+{1\over3}\E{2H_{L_n}H_{KL_n}+H_{KL_n}^2-2\bar H_{L_n}-\bar H_{KL_n}}-(\E{H_{K_n}})^2,
\end{align*}
where according to  Lemma  \ref{Lh1}
\begin{align*}
\E{2H_{K_n}-{2H_{L_n}+4H_{KL_n}\over3}} =O(n^{-1}\log n).
\end{align*}
Thus, applying Lemma \ref{Lh2} we obtain the second statement
\[\cov{\tau_{1}^{(n)}}{\tau_{2}^{(n)}}\rightrightarrows{\pi^2\over6}+{1\over3}\Big(+2\Big({39\over4}-\frac{\pi^{2}}{2}\Big) +\frac{\pi^{2}}{2}-{9\over4}-2\cdot{3\over 2} -\frac{\pi^{2}}{2}+{15\over4}\Big)-4=2-{\pi^2\over6}.\]

Finally, the third statement follows from
\begin{align*}
\cov{\tau_{1}^{(n)}}{\tau_{3}^{(n)}}&=H_n\E{2H_{K_n}-{3H_{M_n}+5H_{LM_n}+10H_{KLM_n}\over9}}+\bar H_n-(\E{H_{K_n}})^2-{1\over3}\E{\bar H_{M_n}}\\
&\quad +{1\over9}\E{H_{M_n}H_{LM_n}+2H_{M_n}H_{KLM_n}+4H_{LM_n}H_{KLM_n}+2H_{KLM_n}^2-4\bar H_{LM_n}-2\bar H_{KLM_n}}.
\end{align*}
Indeed, according to  Lemma  \ref{Lh1}
\begin{align*}
\E{2H_{K_n}-{3H_{M_n}+5H_{LM_n}+10H_{KLM_n}\over9}} =O(n^{-1}\log n).
\end{align*}
Moreover, from the following three limits
\begin{align*}
\bar H_n-(\E{H_{K_n}})^2-{1\over3}\E{\bar H_{M_n}}&\rightrightarrows
{\pi^2\over18}-{185\over54},\\
\E{H_{M_n}H_{LM_n}+2H_{M_n}H_{KLM_n}+4H_{LM_n}H_{KLM_n}}&\rightrightarrows
{1349\over18}-{13\pi^2\over 3},\\
\E{H_{KLM_n}^2-2\bar H_{LM_n}-\bar H_{KLM_n}}&\rightrightarrows 
\frac{2\pi^{2}}{3}-{77\over9},
\end{align*}
we get the stated overall limit 
$${\pi^2\over18}-{185\over54}+{1\over9}\Big({1349\over18}-{13\pi^2\over 3}\Big)+{2\over9}\Big(\frac{2\pi^{2}}{3}-{77\over9}\Big)=3-{5\pi^2\over18}.$$

\section{Proofs of Lemmata \ref{L1} - \ref{cyp}} \label{pL4}
%

\begin{proof} {\sc of Lemma \ref{L1}}.
Using the representation
\begin{align*}
D_{n}^2&=\frac{n}{n-1}\left(\frac{1}{n}\sum\limits_{i=1}^{n} Y_{i}^{2}-\overline{Y}_{n}^2\right)
=\frac{1}{n}\sum\limits_{i}Y_{i}^{2}-\frac{1}{n(n-1)}\sum_i\sum_{j\ne i}Y_{i}Y_{j}
\end{align*}
we  find that
\begin{align*}
\E{D_{n}^{4}}  &=\frac{1}{n^2} \Big(\sum\limits_{i}\E{Y_i^{4}}+\sum_i\sum_{j\ne i}\E{Y_i^{2}Y_j^2}\Big)\\
&\quad - \frac{2}{n^2(n-1)} \Big(\sum_i\sum_{j\ne i}\E{Y_{i}^{3}Y_{j}}+\sum_i\sum_{j\ne i}\E{Y_{i}Y_{j}^{3}}+\sum_i\sum_{j\ne i}\sum_{k\neq i,j}\E{Y_{i}^{2}Y_{j}Y_k}\Big)\\
&\quad + \frac{1}{n^2(n-1)^2}\Big(\sum_i\sum_{j\ne i}\E{Y_{i}^{2}Y_{j}^2}+\sum_i\sum_{j\ne i}\sum_{k\neq i,j}\E{Y_{i}^{2}Y_{j}Y_k}+\sum_i\sum_{j\ne i}\sum_{k\neq i,j}\E{Y_{i}Y_{j}^{2}Y_k}\\
&\quad\qquad\qquad\qquad\qquad\qquad\quad +\sum_i\sum_{j\ne i}\ \sum_{k\neq i,j}\ \sum_{l\neq i,j,k}\E{Y_{i}Y_{j}Y_kY_l}\Big).
\end{align*}
%
If $(W_{1},W_{2},W_{3},W_{4})$ is a random sample without replacement of four out of $n$
trait values, then
\begin{align*}
\E{W_1^{4}}  &=n^{-1} \sum\limits_{i}\E{Y_i^{4}},\\
\E{W_{1}^{3}W_{2}}&= \frac{1}{n(n-1)} \sum_{i}\sum_{j\neq i}\E{Y_{i}^{3}Y_{j}},\\
\E{W_{1}^{2}W_{2}^2}&= \frac{1}{n(n-1)}\sum_{i}\sum_{j\neq i}\E{Y_{i}^{2}Y_{j}^2},
\end{align*}
and 
\begin{align*}
\E{W_{1}^{2}W_{2}W_{3}}&= \frac{1}{n(n-1)(n-2)}\sum_{i}\sum_{j\neq i}\,\sum_{k\neq i,j}\E{Y_{i}^{2}Y_{j}Y_k},
\\
\E{W_{1}W_{2}W_{3}W_{4}}&= \frac{1}{n(n-1)(n-2)(n-3)}\sum_{i}\sum_{j\neq i}\,\sum_{k\neq i,j}\ \sum_{l\neq i,j,k}\E{Y_{i}Y_{j}Y_kY_l}.
\end{align*}
Therefore, we have
\begin{align}
\E{D_{n}^{4}}  &=n^{-1} \E{W_1^{4}}-4n^{-1}\E{W_{1}^{3}W_{2}}+\frac{n^2-2n+2}{n(n-1)}\E{W_{1}^{2}W_{2}^2}\nonumber\\
&\quad  - \frac{2(n-2)^2}{n(n-1)}\E{W_{1}^{2}W_{2}W_{3}}+ \frac{(n-2)(n-3)}{n(n-1)}\E{W_{1}W_{2}W_{3}W_{4}}.\label{stist2}
\end{align}
Since
\begin{align*}
\E{D_{n}^2}&=\E{W_1^2}-\E{W_1W_2}
,
\end{align*}
we conclude
\begin{align*}
\Var{D_{n}^{2}}  &=n^{-1} \E{W_1^{4}}-4n^{-1}\E{W_{1}^{3}W_{2}}+\cov{W_{1}^{2}}{W_{2}^2}
-\frac{n-2}{n(n-1)}\E{W_{1}^{2}W_{2}^2}\nonumber\\
&\quad  -2\cov{W_{1}^{2}}{W_{2}W_3}+ \frac{2(3n-4)}{n(n-1)}\E{W_{1}^{2}W_{2}W_{3}}\\
&\quad  +\cov{W_{1}W_2}{W_{3}W_4}- \frac{2(2n-3)}{n(n-1)}\E{W_{1}W_{2}W_{3}W_{4}}.
\end{align*}
The stated relations follow with 
\begin{align*}
B_n  &=\E{W_1^{4}}-4\E{W_{1}^{3}W_{2}}\\
&\quad -\frac{n-2}{n-1}\E{W_{1}^{2}W_{2}^2}+ \frac{2(3n-4)}{n-1}\E{W_{1}^{2}W_{2}W_{3}}- \frac{2(2n-3)}{n-1}\E{W_{1}W_{2}W_{3}W_{4}}.
\end{align*}

\end{proof}

%
%
 
 \begin{proof} {\sc of Lemma \ref{cyp}}.  Denote by $ Y_{ij}^{(n)}$ the normalized trait value of the most recent common ancestor of the tips $(i,j)$. 
Let $\mathcal Y_{ij}^{(n)}$ stand for the $\sigma$--algebra generated by the pair $(\mathcal Y_n,Y_{ij}^{(n)})$, then 
\begin{align*}
& \E{Y_i|\mathcal Y_{ij}^{(n)}}=\E{Y_j|\mathcal Y_{ij}^{(n)}}=Y_{ij}^{(n)},\\
& \var{Y_i|\mathcal Y_{ij}^{(n)}}=\var{Y_j|\mathcal Y_{ij}^{(n)}}=\tau_{ij}^{(n)},\\
 &\cov{Y_i}{Y_j|\mathcal Y_{ij}^{(n)}}=0,
\end{align*}
implying \eqref{ccov}
\begin{align*}
\covt{Y_i}{Y_j}&=\Vart{Y_{ij}^{(n)}}=U_n-\tau_{ij}^{(n)}.
\end{align*}

By Eq. (13) of \citet[][]{GBohAGol1969}, we have
\[
\cov{Z_iZ_j}{Z_kZ_l} = m_im_kc_{jl}+m_im_lc_{jk}+m_jm_kc_{il}+m_jm_lc_{ik}+c_{ik}c_{jl} + c_{il}c_{jk}
\]
for any sequence of normally distributed random values $Z_1, Z_2, \ldots$ with means $\E{Z_i}=m_{i}$ and covariances
$\cov{Z_i}{Z_j}=c_{ij}$.
In the special case with $m_i=0$ it follows
\begin{align*}
\cov{Z_iZ_j}{Z_kZ_l}& =c_{ik}c_{jl} + c_{il}c_{jk},\\
\cov{Z_i^2}{Z_jZ_k}& = 2c_{ij}c_{ik},\\
\cov{Z_i^2}{Z_j^2}& = 2c_{ij}^2.
\end{align*}
Using conditional normality of $Y_i$ and putting $c_{ij}=U_n-\tau_{ij}^{(n)}$, we derive from these relations that
 \begin{align*}
\covt{Y_i^{2}}{Y_j^{2}}&=2(U_n-\tau_{ij}^{(n)})^2,
\\
\covt{Y_i^{2}}{Y_jY_k}&=2(U_n-\tau_{ij}^{(n)})(U_n-\tau_{ik}^{(n)}),
\\
\covt{Y_iY_j}{Y_kY_l}&=(U_n-\tau_{ik}^{(n)})(U_n-\tau_{jl}^{(n)})+(U_n-\tau_{il}^{(n)})(U_n-\tau_{jk}^{(n)}).
\end{align*}
yielding in terms of \eqref{tou},
 \begin{align*}
\covt{W_1^{2}}{W_2^{2}}&=2(U_n-\tau_1^{(n)})^2,
\\
\covt{W_{1}^{2}}{W_{2}W_3}&=2(U_n-\tau_{1}^{(n)})(U_n-\tau_{2}^{(n)}),
\\
\covt{W_{1}W_{2}}{W_{3}W_{4}}&=(U_n-\tau_{2}^{(n)})(U_n-\tau_{5}^{(n)})+(U_n-\tau_{3}^{(n)})(U_n-\tau_{4}^{(n)}).
\end{align*}

 By the total covariance formula, we derive
\begin{align*}
\cov{W_{1}^{2}}{W_{2}^2}&=2\E{(U_n-\tau_1^{(n)})^2}+\var{U_n},\\
\cov{W_{1}^{2}}{W_{2}W_3}&=2\E{(U_n-\tau_{1}^{(n)})(U_n-\tau_{2}^{(n)})} +\cov{U_n}{U_n-\tau_1^{(n)}},\\
\cov{W_{1}W_{2}}{W_{3}W_{4}}&=2\E{(U_n-\tau_{1}^{(n)})(U_n-\tau_{3}^{(n)})}+\cov{U_n-\tau_{1}^{(n)}}{U_n-\tau_{3}^{(n)}}\\
&=3\E{(U_n-\tau_{1}^{(n)})(U_n-\tau_{3}^{(n)})}-\Big(\E{U_n-\tau_1^{(n)}}\Big)^2.
\end{align*}
Combining these relations we get
\begin{align*}
\cov{W_1^{2}}{W_2^{2}}&-2\cov{W_1^{2}}{W_2W_3}+\cov{W_{1}W_{2}}{W_{3}W_{4}}\\
&=2\E{(U_n-\tau_1^{(n)})^2}-4\E{(U_n-\tau_{1}^{(n)})(U_n-\tau_{2}^{(n)})}\\
&\qquad +3\E{(U_n-\tau_{1}^{(n)})(U_n-\tau_{3}^{(n)})}\\
&\qquad +\var{U_n}-2\cov{U_n}{U_n-\tau_1^{(n)}}-\Big(\E{U_n-\tau_1^{(n)}}\Big)^2.
\end{align*}
This together with
\begin{align*}
\var{U_n}-&2\cov{U_n}{U_n-\tau_1^{(n)}}-\Big(\E{U_n-\tau_1^{(n)}}\Big)^2\\
&=\E{U_n^2}-\E{U_n}^2-2\E{U_n(U_n-\tau_1^{(n)})}+2\E{U_n}\E{U_n-\tau_1^{(n)}}-\Big(\E{U_n-\tau_1^{(n)}}\Big)^2\\
&=\E{U_n^2}-2\E{U_n(U_n-\tau_1^{(n)})}-\E{\tau_1^{(n)}}^2\\
&=\E{(\tau_1^{(n)})^2}-\E{\tau_1^{(n)}}^2-\E{(U_n-\tau_1^{(n)})^2}
\end{align*}
implies the assertion of the Lemma \ref{cyp}
\begin{align*}
\cov{W_1^{2}}{W_2^{2}}&-2\cov{W_1^{2}}{W_2W_3}+\cov{W_{1}W_{2}}{W_{3}W_{4}}\\
&=\E{(U_n-\tau_1^{(n)})^2}+\E{(\tau_1^{(n)})^2}-\E{\tau_1^{(n)}}^2\\
&\quad -4\E{(U_n-\tau_{1}^{(n)})(U_n-\tau_{2}^{(n)})}  +3\E{(U_n-\tau_{1}^{(n)})(U_n-\tau_{3}^{(n)})}\\
&=\E{(U_n-\tau_1^{(n)})^2}+\var{\tau_1^{(n)}}-\E{(U_n-\tau_1^{(n)})U_n}+\E{U_n\tau_1^{(n)}}\\
&\quad -4\E{\tau_{1}^{(n)}\tau_{2}^{(n)}}  +3\E{\tau_{1}^{(n)}\tau_{3}^{(n)}}\\
&=2\var{\tau_1^{(n)}}-4\cov{\tau_{1}^{(n)}}{\tau_{2}^{(n)}}  +3\cov{\tau_{1}^{(n)}}{\tau_{3}^{(n)}}.
\end{align*}

 \end{proof}

\section{Proofs of Lemmata \ref{Le1},  \ref{Le2}, and  \ref{lem}}\label{secl}
\begin{proof} of Lemma \ref{Le1}
From the definition of $K_n$ it is easy to see that, for $k=2,\ldots,n$,
\begin{align*}
P(K_n< k-1|K_n<k)&=1-{1\over {k \choose2}}={(k+1)(k-2)\over k(k-1)}.
\end{align*}
Therefore,
\begin{align*}
P(K_n<k-1)&={(n+1)(n-2)\over n(n-1)}{n(n-3)\over (n-1)(n-2)}{(n-1)(n-4)\over (n-2)(n-3)}\cdots {(k+1)(k-2)\over k(k-1)}\\
&={(n+1)(k-2)\over (n-1)k},\\
P(K_{n}=k-1)&={(n+1)(k-1)\over (n-1)(k+1)}-{(n+1)(k-2)\over (n-1)k}={n+1\over n-1}{2\over (k+1)k}.
\end{align*}
Similarly, for $k=3,\ldots,n$,
\begin{align*}
P(L_n< k-1|L_n<k)&=1-{3\over {k \choose2}}={(k+2)(k-3)\over k(k-1)},\\
P(L_n<k-1)&={(n+2)(n-3)\over n(n-1)}{(n+1)(n-4)\over (n-1)(n-2)}{n(n-5)\over (n-2)(n-3)}\cdots {(k+2)(k-3)\over k(k-1)}\\
&={(n+2)(n+1)(k-2)(k-3)\over (n-1)(n-2)(k+1)k},\\
P(L_{n}=k-1)&={(n+2)(n+1)(k-1)(k-2)\over (n-1)(n-2)(k+2)(k+1)}-{(n+2)(n+1)(k-2)(k-3)\over (n-1)(n-2)(k+1)k}\\
&={6(n+2)(n+1)(k-2)\over (n-1)(n-2)(k+2)(k+1)k},
\end{align*}
and, for $k=4,\ldots,n$,
\begin{align*}
P(M_n< k-1|M_n<k)&=1-{6\over {k \choose2}}={(k+3)(k-4)\over k(k-1)},\\
P(M_n<k-1)&={(n+3)(n+2)(n+1)(k-2)(k-3)(k-4)\over (n-1)(n-2)(n-3)(k+2)(k+1)k},\\
P(M_{n}=k-1)
&={12(n+3)(n+2)(n+1)(k-2)(k-3)\over (n-1)(n-2)(n-3)(k+3)(k+2)(k+1)k}.
\end{align*}
\end{proof}

\begin{proof} of Lemma \ref{Le2}
Clearly,
\begin{align*}
P(K_{L_{n}}=k)&=\sum_{l=k+1}^{n-1}P(K_l=k)P(L_{n}=l)
\\
&={(n+1)(n+2)\over (n-1)(n-2)}\cdot{12\over (k+1)(k+2)}\sum_{l=k+1}^{n-1}{1\over (l+2)(l+3)}
\end{align*}
leads to the first assertion. Further,
\begin{align*}
P(L_{M_{n}}=k)&={(n+1)(n+2)(n+3)\over (n-1)(n-2)(n-3)}\\
&\times\sum_{m=k+1}^{n-1}{12(m-1)(m-2)(m+1)(m+2)\over (m+1)(m+2)(m+3)(m+4)(m-1)(m-2)}\cdot{6(k-1)\over (k+1)(k+2)(k+3)},
\end{align*}
and
\begin{align*}
P(L_{M_{n}}=k)&={(n+1)(n+2)(n+3)\over (n-1)(n-2)(n-3)}\cdot{72(k-1)\over (k+1)(k+2)(k+3)}\sum_{m=k+1}^{n-1}{1\over(m+3)(m+4)}\\
&={(n+1)(n+2)(n+3)\over (n-1)(n-2)(n-3)}\cdot{72(k-1)\over (k+1)(k+2)(k+3)}({1\over k+4}-{1\over n+3}).
\end{align*}
Finally,
\begin{align*}
P(K_{LM_{n}}=k)&=\sum_{m=k+1}^{n-2}{(n+1)(n+2)(n+3)\over (n-1)(n-2)(n-3)}\cdot{72(m-1)\over (m+1)(m+2)(m+3)(m+4)}{m+1\over m-1}\cdot{2\over (k+1)(k+2)}\\
&\qquad-\sum_{m=k+1}^{n-2}{(n+1)(n+2)\over (n-1)(n-2)(n-3)}\cdot{72(m-1)\over (m+1)(m+2)(m+3)}{m+1\over m-1}\cdot{2\over (k+1)(k+2)}\\
&={(n+1)(n+2)(n+3)\over (n-1)(n-2)(n-3)}\cdot{144\over (k+1)(k+2)}\sum_{m=k+1}^{n-2}{1\over(m+2)(m+3)(m+4)}\\
&\qquad-{(n+1)(n+2)\over (n-1)(n-2)(n-3)}\cdot{144\over (k+1)(k+2)}\sum_{m=k+1}^{n-2}{1\over(m+2)(m+3)},
\end{align*}
and therefore, it remains to use the equalities
\begin{align*}
\sum_{m=k+1}^{n-2}{1\over(m+2)(m+3)}
&={1\over k+3}-{1\over n+1},\\
\sum_{m=k+1}^{n-2}{2\over(m+2)(m+3)(m+4)}
&
={1\over (k+3)(k+4)}-{1\over (n+1)(n+2)}.
\end{align*}
\end{proof}

 \begin{proof} {\sc of Lemma \ref{lem}}.  
We have 
\begin{align*}
\E{\tau_{1}^{(n)}}&=\E{U^{(n)}_{K_n}}=H_n-\E{H_{K_n}},
\end{align*}
and
\begin{align*}
\E{(\tau_{1}^{(n)})^2}&=\E{(U^{(n)}_{K_n})^2}=\bar H_n+H_n^2-\E{\bar H_{K_n}+2H_{K_n}H_n-H_{K_n}^2}.
\end{align*}
Further, using \eqref{imp} and $U^{(n)}_{KL_n}=U^{(n)}_{L_n}+U^{(L_n)}_{KL_n}$, we get
\begin{align*}
\E{\tau_{1}^{(n)}\tau_{2}^{(n)}}&={1\over3}\E{(U^{(n)}_{KL_n})^2}+{2\over3}\E{U^{(n)}_{KL_n}U^{(n)}_{L_n}}\\
&=\E{(U^{(n)}_{L_n})^2}+{4\over3}\E{U^{(n)}_{L_n}U^{(L_n)}_{KL_n}}+{1\over3}\E{(U^{(L_n)}_{KL_n})^2}.
\end{align*}
Thus
\begin{align*}
\E{\tau_{1}^{(n)}\tau_{2}^{(n)}}&=H_n^2-2H_n\E{H_{L_n}}+\bar H_n+\E{H_{L_n}^2-\bar H_{L_n}}+{4\over3}\E{(H_n-H_{L_n})(H_{L_n}-H_{KL_n})}\\
&\quad +{1\over3}\E{H_{L_n}^2-2H_{L_n}H_{KL_n}+\bar H_{L_n}+H_{KL_n}^2-\bar H_{KL_n}}
\\
&=H_n^2-H_n\E{{2H_{L_n}+4H_{KL_n}\over3}}+\bar H_n+\E{{2H_{L_n}H_{KL_n}+H_{KL_n}^2-2\bar H_{L_n}-\bar H_{KL_n}\over3}
}.
\end{align*}
Finally, for two pairs of sampled tips, we have three coalescent events to consider:
going from four to three selected nodes, $4\to3$, 
going from three to two selected nodes, $3\to2$, 
and going from two to one selected nodes, $2\to1$. 
The 
coalescent $4\to3$ holds across the two pairs with probability ${4\over{4\choose2}}={2\over3}$ and within a pair with probability ${1\over3}$. Given the former outcome, the coalescent $3\to2$ holds again across the pairs with probability ${1\over3}$ and within a pair with probability ${2\over3}$. Otherwise, the coalescent $3\to2$ holds across the pairs with probability ${2\over3}$ and within the second pair with probability ${1\over3}$. The four possibilities (${2\over3}\times{1\over3}$, ${2\over3}\times{2\over3}$, ${1\over3}\times{2\over3}$, ${1\over3}\times{1\over3}$) produce the following four terms in 
\begin{align*}
\E{\tau_{1}^{(n)}\tau_{3}^{(n)}}={2\over9}\E{(U^{(n)}_{KLM_n})^2}
+{4\over9}\E{U^{(n)}_{LM_n}U^{(n)}_{KLM_n}}+{2\over9}\E{U^{(n)}_{M_n}U^{(n)}_{KLM_n}}+{1\over9}\E{U^{(n)}_{M_n}U^{(n)}_{LM_n}}.
\end{align*}
It follows,
\begin{align*}
\E{\tau_{1}^{(n)}\tau_{3}^{(n)}}&=\E{(U^{(n)}_{M_n})^2+{2\over9}(U^{(M_n)}_{KLM_n})^2
+{10\over9}U^{(n)}_{M_n}U^{(M_n)}_{KLM_n}+{4\over9}U^{(M_n)}_{LM_n}U^{(M_n)}_{KLM_n}+{5\over9}U^{(n)}_{M_n}U^{(M_n)}_{LM_n}}.
\end{align*}
Using the representation for $\E{\tau_{1}^{(M_n)}\tau_{2}^{(M_n)}}$,
\begin{align*}
&{1\over3}\E{(U^{(M_n)}_{KLM_n})^2}+{2\over3}\E{U^{(M_n)}_{KLM_n}U^{(M_n)}_{LM_n}}\\
&\quad=\E{H_{M_n}^2-H_{M_n}{2H_{LM_n}+4H_{KLM_n}\over3}+\bar H_{M_n}+{2H_{LM_n}H_{KLM_n}+H_{KLM_n}^2-2\bar H_{LM_n}-\bar H_{KLM_n}\over3}},
\end{align*}
we can write
\begin{align*}
\E{\tau_{1}^{(n)}\tau_{3}^{(n)}}&=H_n^2-\E{2H_nH_{M_n}+\bar H_{M_n}-H_{M_n}^2}
+\bar H_n+{2\over3}\E{H_{M_n}^2-H_{M_n}{2H_{LM_n}+4H_{KLM_n}\over3}+\bar H_{M_n}}\\
&\quad+{2\over3}\E{{2H_{LM_n}H_{KLM_n}+H_{KLM_n}^2-2\bar H_{LM_n}-\bar H_{KLM_n}\over3}}\\
&\quad+{5\over9}\E{(H_n-H_{M_n})(H_{M_n}-H_{LM_n})}+{10\over9}\E{(H_n-H_{M_n})(H_{M_n}-H_{KLM_n})},
\end{align*}
which after a rearrangement gives the last statement.
 \end{proof}

\section{Proof of Lemmata \ref{Lh1} - \ref{Lh2}}\label{h12}
In this section we will often use the elementary relations of the  following type
\begin{align}
{6\over (k+1)(k+2)(k+3)(k+4)}
&={1\over (k+1)(k+2)}-{2\over (k+2)(k+3)}+{1\over (k+3) (k+4)},\label{64}
\\{(k-1)(k-2)\over (k+1)(k+2)(k+3)(k+4)}
&={1\over (k+1)(k+2)}-{5\over (k+2)(k+3)}+{5\over (k+3) (k+4)},\label{61}\\
{6k\over (k+1)(k+2)(k+3)(k+4)}&=-{1\over (k+1)(k+2)}+{5\over (k+2)(k+3)}-{4\over (k+3)(k+4)},\label{62}\\
{k(k-5)\over (k+1)(k+2)(k+3)(k+4)}&={1\over (k+1)(k+2)}+{6\over (k+3)(k+4)}-{6\over(k+2)(k+3)},\label{65}
\end{align}
valid for all $k\ge1$.

\begin{proof} of Lemma \ref{Lh1}. The first three stated relations  are obtained using Lemmata \ref{Le1} and \ref{L4}. Equalities
 \begin{align*}
\E{H_{K_n}}&={n+1\over n-1}\sum_{k=1}^{n-1}{2H_{k}\over  (k+1)(k+2)}=\frac{2(n-H_{n})}{n-1},\\
\E{H_{L_n}}&={(n+1)(n+2)\over (n-1)(n-2)}\sum_{k=2}^{n-1}{6(k-1)H_k\over (k+1)(k+2)(k+3)}\\
&={6(n+1)(n+2)\over (n-1)(n-2)}\sum_{k=1}^{n-1}
\Big({2H_k\over (k+2)(k+3)}-{H_k\over (k+1)(k+2)}\Big)
={3n(n+1)-6nH_n\over (n-1)(n-2)},
\end{align*}
give the first and the second stated relations, and the third one follows from 
\begin{align*}
\E{H_{M_n}}&={(n+1)(n+2)(n+3)\over (n-1)(n-2)(n-3)}\sum_{k=3}^{n-1}{12(k-1)(k-2)H_k\over (k+1)(k+2)(k+3)(k+4)}\\
&\stackrel{\eqref{61}}{=}{12(n+1)(n+2)(n+3)\over (n-1)(n-2)(n-3)}\sum_{k=1}^{n-1}
\Big({H_k\over (k+1)(k+2)}-{5H_k\over (k+2)(k+3)}+{5H_k\over (k+3)(k+4)}\Big)
\\&={{11\over3}n^3+30n^2+{391\over3}n-20-12(n^2+1)H_n\over (n-1)(n-2)(n-3)}.
\end{align*}
The second three stated relations are obtained similarly  using Lemmata \ref{Le2} and \ref{L4}. Indeed, 
\begin{align*}
\E{H_{KL_n}}&={(n+1)(n+2)\over (n-1)(n-2)}\sum_{k=1}^{n-1}{12H_k\over (k+1)(k+2)(k+3)}-{n+1\over (n-1)(n-2)}\sum_{k=1}^{n-1}{12H_k\over (k+1)(k+2)}\\
&={6(n+1)(n+2)\over (n-1)(n-2)}\Big(\sum_{k=1}^{n-1}{H_k\over (k+1)(k+2)}-\sum_{k=1}^{n-1}{H_k\over (k+2)(k+3)}\Big)-{12(n-H_n)\over (n-1)(n-2)}\\
&={6n(n-H_{n})\over (n-1)(n-2)}-{6(3n^2+5n-4(n+1)H_n)\over 4(n-1)(n-2)}
={6H_{n}\over (n-1)(n-2)}+\frac{3n(n-5)}{2(n-1)(n-2)},
\end{align*}
implying $\E{H_{KL_n}}={3\over2}+O(n^{-1}\log n)$.
Furthermore, 
\begin{align*}
\E{H_{LM_n}}&=\sum_{k=2}^{\infty}{72(k-1)H_k\over (k+1)(k+2)(k+3)(k+4)}+O(n^{-1}\log n)\stackrel{\eqref{64}\eqref{62}}{=}{7\over 3}+O(n^{-1}\log n),\\
\E{H_{KLM_n}}&=\sum_{k=1}^{\infty}{72H_k\over (k+1)(k+2)(k+3)(k+4)}+O(n^{-1}\log n)\stackrel{\eqref{64}}{=}{4\over 3}+O(n^{-1}\log n).
\end{align*}

\end{proof}

\begin{proof} of Lemma \ref{Lh2}. The stated relations are obtained using Lemmas \ref{Le1}, \ref{Le2}, \ref{L4}, \ref{L2}. Firstly,
\begin{align*}
\E{H_{K_n}^2}&={n+1\over n-1}\sum_{k=1}^{n-1}{2H_k^2\over (k+1)(k+2)}
={2(n+1)\over n-1}\Big(\bar H_{n}+\frac{n-H_{n}^2-2H_n}{n+1}
\Big)\\
&={2(\bar H_n(n+1)+n-H_{n}^2-2H_n)\over n-1}\rightrightarrows\frac{\pi^{2}}{3}+2.
\end{align*}
Similarly, we have
\begin{align*}
\E{\bar H_{K_n}}&={n+1\over n-1}\sum_{k=1}^{n-1}{2\bar H_{k}\over (k+1)(k+2)}
={2(n+1)\over n-1}\cdot\frac{n\bar H_{n}-n}{n+1} 
={2n\bar H_n-2n\over n-1}\rightrightarrows\frac{\pi^{2}}{3}-2.
\end{align*}
Observe that the limit is $\sum_{k=1}^{\infty}{2\bar H_{k}\over (k+1)(k+2)}$. In the same manner we obtain
\begin{align*}
\E{H_{KL_n}^2}&\rightrightarrows \sum_{k=1}^{\infty}{12H_{k}^2\over (k+1)(k+2)(k+3)}=\sum_{k=1}^{\infty}{6H_{k}^2\over (k+1)(k+2)}-\sum_{k=1}^{\infty}{6H_{k}^2\over (k+2)(k+3)}=\frac{\pi^{2}}{2}-{9\over4},\\
\E{\bar H_{KL_n}}&\rightrightarrows \sum_{k=1}^{\infty}{12\bar H_{k}\over (k+1)(k+2)(k+3)}=\sum_{k=1}^{\infty}{6\bar H_{k}\over (k+1)(k+2)}-\sum_{k=1}^{\infty}{6\bar H_{k}\over (k+2)(k+3)}=\frac{\pi^{2}}{2}-{15\over4}.
\end{align*}
Using the decomposition \eqref{61}
we find
\begin{align*}
\E{H_{M_n}^2}&\rightrightarrows \sum_{k=3}^{\infty}{12(k-1)(k-2)H_k^2\over (k+1)(k+2)(k+3)(k+4)}
=\frac{\pi^{2}}{3}+{211\over 18},\\
\E{\bar H_{M_n}}&\rightrightarrows \sum_{k=3}^{\infty}{12(k-1)(k-2)\bar H_{k}\over (k+1)(k+2)(k+3)(k+4)}
=\frac{\pi^{2}}{3}-{31\over 18}.
\end{align*}
Using the difference between \eqref{62} and \eqref{64}
we find
\begin{align*}
\E{H_{LM_n}^2}&\rightrightarrows \sum_{k=1}^{\infty}{72(k-1)H_k^2\over (k+1)(k+2)(k+3)(k+4)}
={167\over 18}-\frac{\pi^{2}}{3},\\
\E{\bar H_{LM_n}}&\rightrightarrows \sum_{k=1}^{\infty}{72(k-1)\bar H_{k}\over (k+1)(k+2)(k+3)(k+4)}
={85\over 18}-\frac{\pi^{2}}{3}.
\end{align*}
Using \eqref{64}
we find
\begin{align*}
\E{H_{KLM_n}^2}&\rightrightarrows \sum_{k=1}^{\infty}{72H_k^2\over (k+1)(k+2)(k+3)(k+4)}
=\frac{2\pi^{2}}{3}-{41\over 9},\\
\E{\bar H_{KLM_n}}&\rightrightarrows \sum_{k=1}^{\infty}{72\bar H_{k}\over (k+1)(k+2)(k+3)(k+4)}
=\frac{2\pi^{2}}{3}-{49\over 9}.
\end{align*}

Since
$
\E{H_{K_n}}=\frac{2(n-H_{n})}{n-1}$, we have
\begin{align*}
\E{H_{L_n}H_{KL_n}}
&\rightrightarrows \sum_{m=2}^{\infty}{6(m-1)H_m\over (m+1)(m+2)(m+3)}{2(m-H_m)\over m-1}\\
&=\sum_{m=1}^{\infty}{12mH_m\over (m+1)(m+2)(m+3)}-\sum_{m=1}^{\infty}{12H_m^2\over (m+1)(m+2)(m+3)}\\
&={15\over2}-(\frac{\pi^{2}}{2}-{9\over4})={39\over4}-\frac{\pi^{2}}{2},
\end{align*}
where we use the following corollary of Lemma \ref{L4}
\begin{align*}
\sum_{m=1}^{\infty}&{2mH_m\over (m+1)(m+2)(m+3)}=\sum_{m=1}^{\infty}{3H_m\over (m+2)(m+3)}-\sum_{m=1}^{\infty}{H_m\over (m+1)(m+2)}
={5\over4}.
\end{align*}
Similarly,
\begin{align*}
\E{H_{LM_n}H_{KLM_n}}
&\rightrightarrows \sum_{m=2}^{\infty}{72(m-1)H_m\over (m+1)(m+2)(m+3)(m+4)}{2(m-H_m)\over m-1}\\
&=\sum_{m=1}^{\infty}{144mH_m\over (m+1)(m+2)(m+3)(m+4)}-\sum_{m=1}^{\infty}{144H_m^2\over (m+1)(m+2)(m+3)(m+4)},
\end{align*}
where
\begin{align*}
\sum_{k=1}^{\infty}{144kH_k\over (k+1)(k+2)(k+3)(k+4)}&\stackrel{\eqref{62}}{=}24(-1+15/4-22/9)=
{22\over3},\\
\sum_{k=1}^{\infty}{144H_k^2\over (k+1)(k+2)(k+3)(k+4)}&\stackrel{\eqref{64}}{=}{4\pi^2\over3}-{82\over9},
\end{align*}
so that
$\E{H_{LM_n}H_{KLM_n}}
\rightrightarrows {148\over9}-{4\pi^2\over 3}$.

Further, in view of
$
\E{H_{L_n}}={3n(n+1)-6nH_n\over (n-1)(n-2)}$,
the limit for $\E{H_{M_n}H_{LM_n}}$ can be computed as
\begin{align*}
\E{H_{M_n}H_{LM_n}}
&\rightrightarrows \sum_{m=1}^{\infty}{12(m-1)(m-2)H_m\over (m+1)(m+2)(m+3)(m+4)}{3m(m+1)-6mH_m\over (m-1)(m-2)}\\
&=\sum_{m=1}^{\infty}{36mH_m\over (m+2)(m+3)(m+4)}-\sum_{m=1}^{\infty}{72mH_m^2\over (m+1)(m+2)(m+3)(m+4)},
\end{align*}
where
\begin{align*}
\sum_{m=1}^{\infty}{6mH_m^2\over (m+1)(m+2)(m+3)(m+4)}&\stackrel{\eqref{62}}{=}{\pi^2\over 36}+{85\over216},\\
\sum_{m=1}^{\infty}{mH_m\over (m+2)(m+3)(m+4)}&=\sum_{m=1}^{\infty}{2H_m\over (m+3)(m+4)}-\sum_{m=1}^{\infty}{H_m\over (m+2)(m+3)}
={17\over36},
\end{align*}
yielding
$\E{H_{M_n}H_{LM_n}}
\rightrightarrows {221\over18}-{\pi^2\over 3}$.
Finally, from
\begin{align*}
\E{H_{KL_m}}&={6H_{m}\over (m-1)(m-2)}+\frac{3m(m-5)}{2(m-1)(m-2)}
\end{align*}
we get
\begin{align*}
\E{H_{M_n}H_{KLM_n}}
&\rightrightarrows \sum_{m=1}^{\infty}{12(m-1)(m-2)H_m\over (m+1)(m+2)(m+3)(m+4)}\Big({6H_{m}\over (m-1)(m-2)}+\frac{3m(m-5)}{2(m-1)(m-2)}\Big)\\
&=\sum_{m=1}^{\infty}{72H_m^2\over (m+1)(m+2)(m+3)(m+4)}+\sum_{m=1}^{\infty}{18m(m-5)H_m\over (m+1)(m+2)(m+3)(m+4)},
\end{align*}
where
\begin{align*}
\sum_{m=1}^{\infty}{72H_m^2\over (m+1)(m+2)(m+3)(m+4)}
&\stackrel{\eqref{64}}{=}{2\pi^2\over 3}-{41\over9},\\
\sum_{m=1}^{\infty}{m(m-5)H_m\over (m+1)(m+2)(m+3)(m+4)}&\stackrel{\eqref{65}}{=}{1\over6},
\end{align*}
so that
$
 \E{H_{M_n}H_{KLM_n}}\rightrightarrows {2\pi^2\over3}-{14\over9}$.
\end{proof}

\section*{Acknowledgements}
The research of Serik Sagitov was supported by the Swedish Research Council grant 621-2010-5623. 
Krzysztof Bartoszek was supported by the Centre for Theoretical Biology at the University of Gothenburg, 
Svenska Institutets \"Ostersj\"osamarbete scholarship nr. 11142/2013,
Stiftelsen f\"or Vetenskaplig Forskning och Utbildning i Matematik
(Foundation for Scientific Research and Education in Mathematics), 
Knut and Alice Wallenbergs travel fund, Paul and Marie Berghaus fund, the Royal Swedish Academy of Sciences,
and Wilhelm and Martina Lundgrens research fund.

\bibliographystyle{plainnat}
\bibliography{SagitovBartoszek}

\appendix
\section{Auxiliary results involving harmonic numbers}\label{App}
Some of the following results can be found in \citet{VAda1997} and \citet{ASofo2011,ASofo2012,ASofo2013}.

\begin{lemma}\label{L4}
We have 
$$\sum\limits_{k=1}^{n-1}\frac{H_{k}}{k(k+1)}=\bar H_n-\frac{H_n}{n},\quad \sum\limits_{k=1}^\infty\frac{H_{k}}{k(k+1)}=\frac{\pi^2}{6},$$
and for $m\ge1$,
\begin{align*}
\sum\limits_{k=1}^{n-1}\frac{H_{k}}{(k+m)(k+m+1)}&
=\frac{H_{m}}{m}-\frac{H_{n+m}-H_{n}}{m}-\frac{H_{n}}{n+m},\\
\sum\limits_{k=1}^{n-1}\frac{\bar H_{k}}{(k+m)(k+m+1)}&=\frac{n\bar H_{n}}{(n+m)m} - \frac{H_{m}}{m^{2}}+\frac{H_{n+m}-H_n}{m^{2}},
\end{align*}
so that
\begin{align*}
\sum\limits_{k=1}^{\infty}\frac{H_{k}}{(k+m)(k+m+1)}&={H_m\over m},\qquad \sum\limits_{k=1}^{\infty}\frac{\bar H_{k}}{(k+m)(k+m+1)} =\frac{\pi^{2}}{6m} - \frac{H_{m}}{m^{2}}.
\end{align*}
In particular,
\begin{align*}
&\sum\limits_{k=1}^{n-1}\frac{H_{k}}{(k+1)(k+2)}= \frac{n-H_{n}}{n+1},\quad\sum\limits_{k=1}^{n-1}\frac{H_{k}}{(k+2)(k+3)}
=\frac{3n^2+5n-4(n+1)H_{n}}{4(n+1)(n+2)},\\
&\sum\limits_{k=1}^{n-1}\frac{\bar H_{k}}{(k+1)(k+2)}=\frac{n\bar H_{n}-n}{n+1},\quad 
\sum\limits_{k=1}^{n-1}\frac{\bar H_{k}}{(k+2)(k+3)}
={n\bar H_{n}\over2(n+2)}-{3n^2+5n\over8(n+1)(n+2)},
\end{align*}
and
\begin{align*}
&\sum\limits_{k=1}^\infty\frac{H_{k}}{(k+1)(k+2)}= 1,\quad
\sum\limits_{k=1}^\infty\frac{H_{k}}{(k+2)(k+3)}=\frac{3}{4},\quad
\sum\limits_{k=1}^\infty\frac{H_{k}}{(k+3)(k+4)}=\frac{11}{18},\\
&\sum\limits_{k=1}^\infty\frac{\bar H_{k}}{(k+1)(k+2)}=\frac{\pi^2}{6}-1,\quad 
\sum\limits_{k=1}^\infty\frac{\bar H_{k}}{(k+2)(k+3)}=\frac{\pi^2}{12}-\frac{3}{8},\quad 
\sum\limits_{k=1}^\infty\frac{\bar H_{k}}{(k+3)(k+4)}=\frac{\pi^2}{18}-\frac{11}{54}.
\end{align*}
\end{lemma}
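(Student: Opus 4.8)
The plan is to derive every finite-sum identity from a single summation-by-parts step, then pass to the limit $n\to\infty$ for the infinite sums and substitute $m=1,2,3$ for the explicit values. Fix $m\ge0$ and write $a_k=\tfrac{1}{(k+m)(k+m+1)}=\tfrac{1}{k+m}-\tfrac{1}{k+m+1}$, so that the tail sums telescope: setting $b_k:=\sum_{j=k}^{n-1}a_j=\tfrac{1}{k+m}-\tfrac{1}{n+m}$ for $1\le k\le n$ (with $b_n=0$) one has $a_k=b_k-b_{k+1}$. Since $H_k-H_{k-1}=\tfrac1k$ and $H_1=1$, Abel summation collapses the weighting from $H_k$ to $\tfrac1k$:
\[
\sum_{k=1}^{n-1}H_k a_k=\sum_{k=1}^{n-1}\frac{b_k}{k}=\sum_{k=1}^{n-1}\frac1k\Big(\frac{1}{k+m}-\frac{1}{n+m}\Big).
\]
For $m=0$ the right-hand side is $\bar H_{n-1}-\tfrac{H_{n-1}}{n}$, which equals $\bar H_n-\tfrac{H_n}{n}$ once the top term is adjusted, giving the first identity. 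For $m\ge1$ I would use $\tfrac{1}{k(k+m)}=\tfrac1m\big(\tfrac1k-\tfrac{1}{k+m}\big)$, which makes $\sum_{k=1}^{n-1}\tfrac{1}{k(k+m)}$ telescoping and equal to $\tfrac1m(H_m+H_{n-1}-H_{n+m-1})$; substituting and replacing $H_{n-1},H_{n+m-1}$ by $H_n,H_{n+m}$ yields exactly $\tfrac{H_m}{m}-\tfrac{H_{n+m}-H_n}{m}-\tfrac{H_n}{n+m}$.

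The $\bar H_k$-identity follows by the same route. Because $\bar H_k-\bar H_{k-1}=\tfrac{1}{k^2}$ and $\bar H_1=1$, the identical summation by parts gives $\sum_{k=1}^{n-1}\bar H_k a_k=\sum_{k=1}^{n-1}\tfrac{b_k}{k^2}=\sum_{k=1}^{n-1}\tfrac{1}{k^2(k+m)}-\tfrac{\bar H_{n-1}}{n+m}$. The two-step partial fraction $\tfrac{1}{k^2(k+m)}=\tfrac1m\cdot\tfrac{1}{k^2}-\tfrac{1}{m^2}\big(\tfrac1k-\tfrac{1}{k+m}\big)$ then reduces the remaining sum to $\bar H_{n-1}$ plus a telescoping piece, and collecting terms (again replacing the $n-1$ indices by $n$, which introduces correction terms that cancel identically) produces the stated closed form. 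The infinite-sum versions are immediate: with $m$ fixed and $n\to\infty$ one has $H_{n+m}-H_n\to0$, $\tfrac{H_n}{n+m}\to0$, $\tfrac{n}{n+m}\to1$ and $\bar H_n\to\tfrac{\pi^2}{6}$ by Basel's identity, so the closed forms collapse to $\tfrac{H_m}{m}$ and $\tfrac{\pi^2}{6m}-\tfrac{H_m}{m^2}$ (and the $m=0$ case to $\tfrac{\pi^2}{6}$). The ``in particular'' list is obtained by putting $m=1,2,3$, using $H_1=1$, $H_2=\tfrac32$, $H_3=\tfrac{11}{6}$, and simplifying.

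There is no genuine difficulty here; the one thing to watch is the bookkeeping of boundary terms. Because the sums run to $n-1$ rather than $n$, the substitutions $H_{n-1}=H_n-\tfrac1n$, $\bar H_{n-1}=\bar H_n-\tfrac1{n^2}$, $H_{n+m-1}=H_{n+m}-\tfrac1{n+m}$ must be made at the very end, and one has to verify that the resulting correction terms cancel identically (they do, term by term). A minor secondary point is justifying term-by-term passage to the limit in the infinite series, which is immediate from $\tfrac{H_k}{(k+m)(k+m+1)}=O(k^{-2}\log k)$ and $\tfrac{\bar H_k}{(k+m)(k+m+1)}=O(k^{-2})$.
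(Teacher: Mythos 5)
Your proof is correct and follows essentially the same route as the paper: your Abel summation with the tail sums $b_k=\tfrac{1}{k+m}-\tfrac{1}{n+m}$ produces exactly the intermediate expression $\sum_{i}\tfrac{1}{i}\bigl(\tfrac{1}{i+m}-\tfrac{1}{n+m}\bigr)$ that the authors obtain by expanding $H_k=\sum_{i\le k}1/i$ and interchanging the order of summation, and the subsequent partial-fraction and telescoping steps coincide. The only cosmetic difference is that the paper avoids your boundary bookkeeping by extending the sum to $i=n$ (the added term vanishes), whereas you carry the $n-1$ indices to the end and cancel the corrections.
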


\begin{proof}
Clearly, 
\begin{align*}
\sum\limits_{k=1}^{n-1}\frac{H_{k}}{k(k+1)}&=\sum_{k=1}^{n-1}\frac{1}{k(k+1)}\sum_{i=1}^{k}{1\over i}
=\sum_{i=1}^{n-1}{1\over i}\Big(\frac{1}{i}-\frac{1}{n}\Big)=\bar H_{n}-
\frac{H_n}{n}.
\end{align*}
Similarly for $m\ge1$, we have
\begin{align*}
\sum\limits_{k=1}^{n-1}\frac{H_{k}}{(k+m)(k+m+1)}
&=\sum_{i=1}^{n-1}{1\over i}\Big(\frac{1}{i+m}-\frac{1}{n+m}\Big)=\sum_{i=1}^{n}{1\over i}\Big(\frac{1}{i+m}-\frac{1}{n+m}\Big)\\
&= \frac{1}{m}\left(\sum_{i=1}^{n}{1\over i}-\sum_{i=1}^{n}\frac{1}{i+m}\right)-\frac{H_n}{n+m}= \frac{1}{m}\left(H_{n}-H_{n+m}+H_{m}\right)-\frac{H_n}{n+m},
\end{align*}
and
\begin{align*}
\sum\limits_{k=1}^{n-1}&\frac{\bar H_{k}}{(k+m)(k+m+1)}=\sum_{i=1}^{n}{1\over i^2}\Big(\frac{1}{i+m}-\frac{1}{n+m}\Big)= \frac{1}{m}\left(\sum_{i=1}^{n}{1\over i^2}
-\sum_{i=1}^{n}\frac{1}{i(i+m)}\right)-\frac{\bar H_n}{n+m}\\
&= \frac{1}{m}\left(\bar H_n- \frac{1}{m}\left(H_n+H_{m}-H_{n+m}\right)\right)-\frac{\bar H_n}{n+m} 
= \frac{n\bar H_{n}}{(n+m)m} - \frac{H_{m}}{m^{2}}+
\frac{H_{n+m}-H_n}{m^{2}}.
\end{align*}
 
\end{proof}

\begin{lemma}\label{L2}
We have
\begin{align*}
\sum\limits_{k=1}^{n-1}&\frac{H_{k}^2}{(k+1)(k+2)}
=\bar H_{n}+\frac{n-H_{n}^2-2H_n}{n+1},\\
\sum\limits_{k=1}^{n-1}&\frac{H_{k}^2}{(k+2)(k+3)}
=\frac{\bar H_{n}}{2}+{11n^2+21n\over8(n+1)(n+2)}-{H_n(2n+3)\over (n+1)(n+2)}-\frac{H_{n}^2}{n+2},
\end{align*}
and generally for $m\ge1$, 
\begin{align*}
\sum\limits_{k=1}^{\infty}\frac{H_{k}^2}{(k+m)(k+m+1)}&=\frac{1}{m}\Big({\pi^2\over 6}+H_m^2+\bar H_m-\frac{H_{m}}{m}\Big).
\end{align*}
In particular,
\begin{align*}
\sum\limits_{k=1}^{\infty}\frac{H_{k}^2}{(k+1)(k+2)}&={\pi^2\over 6}+1,\quad
\sum\limits_{k=1}^{\infty}\frac{H_{k}^2}{(k+2)(k+3)}={\pi^2\over 12}+{11\over8},\quad
\sum\limits_{k=1}^{\infty}\frac{H_{k}^2}{(k+3)(k+4)}={\pi^2\over 18}+{37\over27}.
\end{align*}
\end{lemma}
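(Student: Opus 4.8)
The plan is to reduce all four identities to the same two elementary moves already used in the proof of Lemma~\ref{L4}: telescoping, $\sum_{k=\ell}^{N}\big(\tfrac1{k+m}-\tfrac1{k+m+1}\big)=\tfrac1{\ell+m}-\tfrac1{N+m+1}$, and the partial fraction $\tfrac1{i(i+m)}=\tfrac1m\big(\tfrac1i-\tfrac1{i+m}\big)$, applied to the square representation $H_k^2=\sum_{1\le i,j\le k}\tfrac1{ij}$. Since every term in every sum below is non-negative, Tonelli's theorem justifies each interchange of the order of summation.

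First I would prove the general-$m$ infinite identity. Interchanging the order of summation in $\sum_{k\ge1}\tfrac{H_k^2}{(k+m)(k+m+1)}$ and telescoping the inner sum over $k$ collapses it to $\sum_{i,j\ge1}\tfrac1{ij\,(\max(i,j)+m)}$. Isolating the diagonal $i=j$ and using the symmetry of the off-diagonal part gives
\[
\sum_{k\ge1}\frac{H_k^2}{(k+m)(k+m+1)}
=\sum_{i\ge1}\frac1{i^2(i+m)}+2\sum_{i<j}\frac1{ij(j+m)}.
\]
The diagonal sum is $\tfrac{\pi^2}{6m}-\tfrac{H_m}{m^2}$ via $\tfrac1{i^2(i+m)}=\tfrac1{m}\tfrac1{i^2}-\tfrac1{m^2}\big(\tfrac1i-\tfrac1{i+m}\big)$. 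In the off-diagonal sum I would sum over $i$ first, obtaining $2\sum_{j\ge1}\tfrac{H_{j-1}}{j(j+m)}$, then write $H_{j-1}=H_j-\tfrac1j$; the sum $\sum_j\tfrac{H_j}{j(j+m)}$ is evaluated by one further interchange (expanding $H_j$ \emph{before} any partial fraction, so the individually divergent pieces $\sum_j\tfrac{H_j}{j}$ and $\sum_j\tfrac{H_j}{j+m}$ never occur in isolation), giving $\tfrac1m\big(\tfrac{\pi^2}{6}+\sum_{r=1}^{m-1}\tfrac{H_r}{r}\big)$, after which one uses $\sum_{r=1}^{m-1}\tfrac{H_r}{r}=\tfrac12(H_m^2+\bar H_m)-\tfrac{H_m}{m}$ (a telescoping consequence of $H_r^2-H_{r-1}^2=\tfrac{2H_r}{r}-\tfrac1{r^2}$). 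Collecting the pieces, the $\pi^2$- and $H_m/m^2$-contributions cancel down to $\tfrac1m\big(\tfrac{\pi^2}{6}+H_m^2+\bar H_m-\tfrac{H_m}{m}\big)$, and $m=1,2,3$ then yield the three ``in particular'' values.

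For the two finite-$n$ identities I would rerun exactly the same computation with upper summation limit $n-1$ throughout. Now telescoping leaves boundary terms $-\tfrac1{n+m}$, the factor $-\tfrac1{n+m}\sum_{i,j\le n-1}\tfrac1{ij}=-\tfrac{H_{n-1}^2}{n+m}$ appears, and the finite $H_j$-sums terminate, contributing $\bar H_n$, $H_n$ and $H_n^2$ through Lemma~\ref{L4} together with $H_{n-1}=H_n-\tfrac1n$ and $\bar H_{n-1}=\bar H_n-\tfrac1{n^2}$. Putting everything over the common denominator $n+1$ (for $m=1$) or $(n+1)(n+2)$ (for $m=2$) reproduces the two stated rational functions of $n$. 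The difficulty here is purely organisational: one must fix the order of operations in the off-diagonal sum so that only conditionally meaningful combinations are ever written down, and in the finite case carry all the $1/n$ and $1/n^2$ boundary corrections accurately through several partial-fraction rearrangements to land precisely on the given closed forms; a convenient check at each stage is to let $n\to\infty$ and compare with the infinite-sum formula.
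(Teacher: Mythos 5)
Your argument is correct and all the intermediate identities check out (I verified the general-$m$ collection step and the finite $m=1$ case explicitly; the exact value $\sum_{j\ge1}H_j/(j(j+m))=\tfrac1m\bigl(\tfrac{\pi^2}{6}+\sum_{r=1}^{m-1}\tfrac{H_r}{r}\bigr)$ and the Euler relation $2\sum_{r\le m}H_r/r=H_m^2+\bar H_m$ are exactly what is needed). Your route differs from the paper's in the initial decomposition: you expand \emph{both} factors of $H_k^2=\sum_{i,j\le k}\tfrac1{ij}$, telescope in $k$ to reach $\sum_{i,j}\tfrac{1}{ij(\max(i,j)+m)}$, and split into diagonal and off-diagonal parts, whereas the paper expands only one factor, writes $\sum_k\tfrac{H_k^2}{(k+m)(k+m+1)}=\sum_i\tfrac1i\sum_{k\ge i}\tfrac{H_k}{(k+m)(k+m+1)}$, feeds in the closed-form tail sums from Lemma \ref{L4}, and then evaluates the resulting quantity $\sum_{i=1}^{n-1}\bigl(\tfrac{H_{i+m}}{i}-\tfrac{H_i}{i+m}\bigr)$ by an induction on $m$. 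Your symmetric double-sum version is cleaner for the infinite identities, since absolute convergence makes every interchange a direct application of Tonelli and no conditionally convergent combinations ever appear; the paper's one-factor version is better adapted to producing the exact finite-$n$ closed forms in one pass, because Lemma \ref{L4} already carries all the boundary terms. Both proofs ultimately rest on the same two ingredients (interchange of summation plus telescoping, and the identity $2\sum_{r\le m}H_r/r=H_m^2+\bar H_m$), so the difference is one of bookkeeping rather than substance; for the finite $m=2$ identity your method requires tracking several $1/n$ and $1/n^2$ corrections, which you correctly flag as the only delicate point, and your suggestion of checking against the $n\to\infty$ limit is a sound safeguard.
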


\begin{proof} 
For $m\ge1$,
%
\begin{align*}
\sum\limits_{k=1}^{n-1}\frac{H_{k}^2}{(k+m)(k+m+1)}&=\sum_{k=1}^{n-1}\frac{H_k}{(k+m)(k+m+1)}\sum_{i=1}^{k}{1\over i}=\sum_{i=1}^{n-1}{1\over i}\sum_{k=i}^{n-1}\frac{H_k}{(k+m)(k+m+1)}
\\
&=\sum_{i=1}^{n-1}{1\over i}\left(\frac{H_{m}}{m}-\frac{H_{n+m}-H_{n}}{m}-\frac{H_{n}}{n+m}-\frac{H_{m}}{m}+\frac{H_{i+m}-H_{i}}{m}+\frac{H_{i}}{i+m}\right)\\
&=\sum_{i=1}^{n-1}{1\over i}\left(\frac{H_{i+m}-H_{i}}{m}+\frac{H_{i}}{i+m}\right)-\frac{H_{n-1}(H_{n+m}-H_{n})}{m}-\frac{H_{n-1}H_{n}}{n+m}\\
&=\frac{1}{m}\sum_{i=1}^{n-1}\left({H_{i+m}\over i}-\frac{H_{i}}{i+m}\right)-\frac{H_{n-1}(H_{n+m}-H_{n})}{m}-\frac{H_{n-1}H_{n}}{n+m}.
\end{align*}
Observe that
\begin{align*}
\sum_{i=1}^{n-1}\left({H_{i+1}\over i}-\frac{H_{i}}{i+1}\right)&=\sum_{i=1}^{n-1}\left({H_{i+1}\over i}-{H_i\over i}\right)+\bar H_n-{H_n\over n}=\bar H_{n}+1-{1\over n}-{H_{n}\over n},
\end{align*}
and for $k\ge2$,
\begin{align*}
\sum_{i=1}^{n-1}\left({H_{i+k}\over i}-\frac{H_{i}}{i+k}\right)&-\sum_{i=1}^{n-1}\left({H_{i+k-1}\over i}-\frac{H_{i}}{i+k-1}\right)
= \sum_{i=1}^{n-1}{1\over i(i+k)}+\sum_{i=1}^{n-1}\frac{H_{i}}{(i+k)(i+k-1)}
\\
&=\frac{1}{k}\left(H_{n-1}+H_{k}-H_{n+k-1}\right)+\frac{H_{k-1}}{k-1}-\frac{H_{n+k-1}-H_{n}}{k-1}-\frac{H_{n}}{n+k-1}
\\
&=\frac{H_{k}}{k}+\frac{H_{k-1}}{k-1}-\frac{H_{n+k-1}-H_{n-1}}{k}-\frac{H_{n+k-1}-H_{n}}{k-1}-\frac{H_{n}}{n+k-1}.
\end{align*}
It follows
\begin{align*}
\sum_{i=1}^{n-1}\left({H_{i+m}\over i}-\frac{H_{i}}{i+m}\right)&=\bar H_{n}+1-{1\over n}-{H_{n}\over n}
\\
&+\sum_{k=2}^{m}\left(\frac{H_{k}}{k}+\frac{H_{k-1}}{k-1}-\frac{H_{n+k-1}-H_{n-1}}{k}-\frac{H_{n+k-1}-H_{n}}{k-1}-\frac{H_{n}}{n+k-1}\right)\\
&=\bar H_{n}-{1\over n}-{H_{n}\over n}+2\sum_{k=1}^{m}\frac{H_k}{k}-{H_m\over m}
\\
&-\sum_{k=2}^{m}\left(\frac{H_{n+k-1}-H_{n-1}}{k}+\frac{H_{n+k-1}-H_{n}}{k-1}\right)-H_{n}(H_{n+m-1}-H_n).
\end{align*}
Using the classical relation $2\sum_{k=1}^{m}\frac{H_k}{k} =H_m^{2}+\bar H_{m}$ which follows from 
\[ \sum_{k=1}^{m}\frac{H_k}{k} =\sum_{k=1}^{m}\frac{1}{k} \sum_{i=1}^{k}\frac{1}{i}
=\sum_{i=1}^{m}\frac{1}{i}\sum_{k=i}^{m}\frac{1}{k} =\sum_{i=1}^{m}\frac{H_m-H_{i-1}}{i}
= H_m^{2}+\bar H_{m}-\sum_{k=1}^{m}\frac{H_k}{k},\]
we get
\begin{align*}
&\sum_{i=1}^{n-1}\left({H_{i+m}\over i}-\frac{H_{i}}{i+m}\right)=\bar H_{n}-{1\over n}+H_m^{2}+\bar H_{m}-{H_m\over m}
\\
&\qquad \qquad \qquad \qquad -\sum_{k=2}^{m}\left(\frac{H_{n+k-1}-H_{n-1}}{k}+\frac{H_{n+k-1}-H_{n}}{k-1}\right)-H_{n}(H_{n+m-1}-H_{n-1})\\
&\quad =\bar H_{n}+H_m^{2}+\bar H_{m}-{H_m\over m}-\sum_{k=1}^{m}\frac{H_{n+k-1}-H_{n-1}}{k}-\sum_{k=1}^{m-1}\frac{H_{n+k}-H_{n}}{k}-H_{n}(H_{n+m-1}-H_{n-1}),
\end{align*}
Thus
\begin{align*}
\sum\limits_{k=1}^{n-1}&\frac{H_{k}^2}{(k+m)(k+m+1)}
=\frac{1}{m}\left(\bar H_{n}+H_m^{2}+\bar H_{m}-{H_m\over m}\right)-\frac{H_{n-1}(H_{n+m}-H_{n})}{m}-\frac{H_{n-1}H_{n}}{n+m}\\
&-\frac{1}{m}\sum_{k=1}^{m}\frac{H_{n+k-1}-H_{n-1}}{k}-\frac{1}{m}\sum_{k=1}^{m-1}\frac{H_{n+k}-H_{n}}{k}-{H_n(H_{n+m-1}-H_{n-1})\over m}.\end{align*}
To finish the proof it remains to observe that $\frac{H_{n+k}-H_{n}}{k}\to0$ as $n\to\infty$ for any fixed $k$.

\end{proof}


\end{document}